\newtheorem{proposition}{Proposition}
\theoremstyle{definition}
\newtheorem{example}{Example}
\theoremstyle{definition}
\newtheorem{definition}{Definition}
\renewcommand{\vec}{\mathbf}
\newcommand{\calA}{\mathcal{A}}
\renewcommand{\r}{\vec{r}}
\newcommand{\comp}{\mathsf{c}}
\newcommand{\nusers}{n}
\newcommand{\nitems}{m}
\newcommand{\latentdim}{d}
\newcommand{\R}{\mathbb{R}}
\newcommand{\calR}{\mathcal{R}}
\DeclareMathOperator*{\lse}{LSE}
\theoremstyle{definition}
\renewcommand{\cite}{\citep}
\title{Quantifying Availability and Discovery in Recommender Systems via Stochastic Reachability}
\author{
  Mihaela Curmei*, Sarah Dean*, and Benjamin Recht \\
  Department of EECS, University of California, Berkeley
}
\date{}
\begin{document}

\maketitle

\begin{abstract}
In this work, we consider how preference models in interactive recommendation systems determine the availability of content and users' opportunities for discovery. We propose an evaluation procedure based on stochastic reachability to quantify the maximum probability of recommending a target piece of content to an user for a set of allowable strategic modifications. This framework allows us to compute an upper bound on the likelihood of recommendation with minimal assumptions about user behavior. Stochastic reachability can be used to detect biases in the availability of content and diagnose limitations in the opportunities for discovery granted to users. We show that this metric can be computed efficiently as a convex program for a variety of  practical settings, and further argue that reachability is not inherently at odds with accuracy. We demonstrate evaluations of recommendation algorithms trained on large datasets of explicit and implicit ratings.
Our results illustrate how preference models, selection rules, and user interventions impact reachability and how these effects can be distributed unevenly.
\end{abstract}
\section{Introduction}

Through recommendation systems, personalized preference models mediate access to many types of information on the internet.
Aiming to surface content that will be consumed, enjoyed, and highly rated, these models are primarily designed to accurately predict individuals' preferences.
However, it is important to look beyond measures of accuracy towards notions of access.
The focus on improving recommender model accuracy favors systems in which human behavior becomes as predictable as possible---effects which have been implicated in unintended consequences like polarization or radicalization. %

We focus on questions of access and agency by
adopting
an \emph{interventional} lens, which considers arbitrary and strategic user actions.
We expand upon the notion of reachability first proposed by \citet{dean2020recommendations}, which measures the ability of an individual to influence a recommender model to select a certain piece of content.
We define a notion of stochastic reachability which quantifies
the maximum achievable likelihood of a given recommendation in the presence of strategic interventions.
This metric provides an upper bound on the ability of individuals to discover specific content, thus isolating unavoidable biases within preference models from those due to user behavior.

Our primary contribution is the definition of metrics based on stochastic reachability which capture the possible outcomes of a round of system interactions, including the \emph{availability} of content and \emph{discovery} possibilities for individuals.
In Section~\ref{sec:computing_reachability}, we show that they can be computed by solving a convex optimization problem for a class of relevant recommenders.
In Section~\ref{sec:geometry}, we draw connections between the stochastic and deterministic settings.
This perspective allows us to describe the relationship between agency and stochasticity and further to argue that there is not an inherent trade-off between reachability and model accuracy.
Finally, we present an audit of recommendation systems using a variety of datasets and preference models.
We explore how design decisions influence reachability and the extent to which biases in the training datasets are propagated.

\begin{figure}
    \center
    \includegraphics[width=0.7\columnwidth]{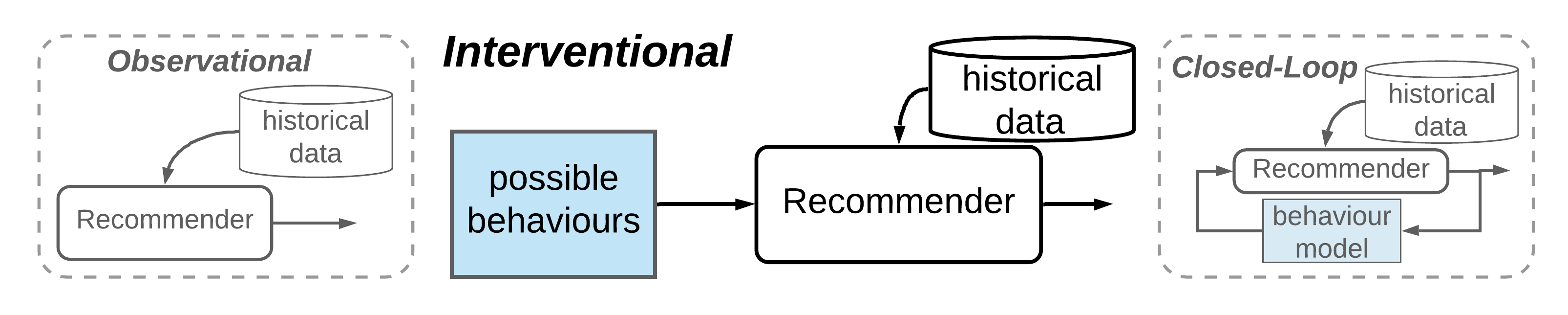}
    \caption{
    Conceptual framings of recommendation systems consider user behaviors to varying degrees. In this work we focus on evaluating interventional properties.} \label{fig:bias_diagram}
\end{figure}

\subsection{Related Work} %

The recommender systems literature has long proposed a variety of other metrics for evaluation, including notions of novelty, serendipity, diversity, and coverage~\cite{herlocker,castells2011novelty}.
There is a long history of measuring and mitigating bias in recommendation systems~\cite{chen2020bias}.
Empirical investigations have found evidence of popularity and demographic bias in domains including movies, music, books, and hotels~\cite{abdollahpouri2019impact,ekstrand2018all,ekstrand2018exploring,jannach2015recommenders}.
Alternative metrics are useful both for diagnosing biases and as objectives for post-hoc mitigating techniques such as calibration~\cite{steck2018calibrated} and re-ranking~\cite{singh2018fairness}.
A inherent limitation of these approaches is that they focus on \emph{observational} bias induced by preference models, i.e. examining the result of a single round of recommendations without considering individuals' behaviors.  While certainly useful, they fall short of providing further understanding into the interactive nature of recommendation systems.

The behavior of recommendation systems over time and in \emph{closed-loop} is still an open area of study. It is difficult to definitively link observational evidence of radicalization~\cite{ribeiro2020auditing,faddoul2020longitudinal} to proprietary recommendation algorithms.
Empirical studies of human behavior find mixed results on the relationship between recommendation and content diversity~\cite{nguyen2014exploring,flaxman2016filter}.
Simulation studies~\cite{chaney2018algorithmic,yao2021measuring,krauth2020offline} and theoretical investigations~\cite{dandekar2013polarization} shed light on phenomena  in simplified settings, showing how homogenization, popularity bias, performance, and polarization depend on assumed user behavior models.
Even ensuring accuracy
in sequential dynamic settings requires contending with closed-loop behaviors.
Recommendation algorithms must mitigate biased sampling in order to learn underlying user preference models,
using causal inference based techniques~\cite{schnabel2016recommendations,yang2018unbiased}
or
by balancing exploitation and exploration~\cite{kawale2015efficient,mary2015bandits}.
Reinforcement Learning algorithms contend with these challenges while considering a longer time horizon~\cite{chen2019top,ie2019slateq}, implicitly using data to exploit user behavior.

Our work eschews behavior models in favor of an \emph{interventional} framing which considers a variety of possible user actions.
Giving users control over their recommendations has been found to have positive effects, while reducing agency has negative effects~\cite{harper2015putting,lukoff2021design}.
The formal perspective we take on agency and access in recommender systems was first introduced by~\citet{dean2020recommendations},
and is closely related to a body on work on recourse in consequential decision making~\cite{ustun2019actionable,karimi2020survey}.
We build on this work to consider stochastic recommendation policies.

\begin{figure*}
	\center
	\includegraphics[height=3.25cm]{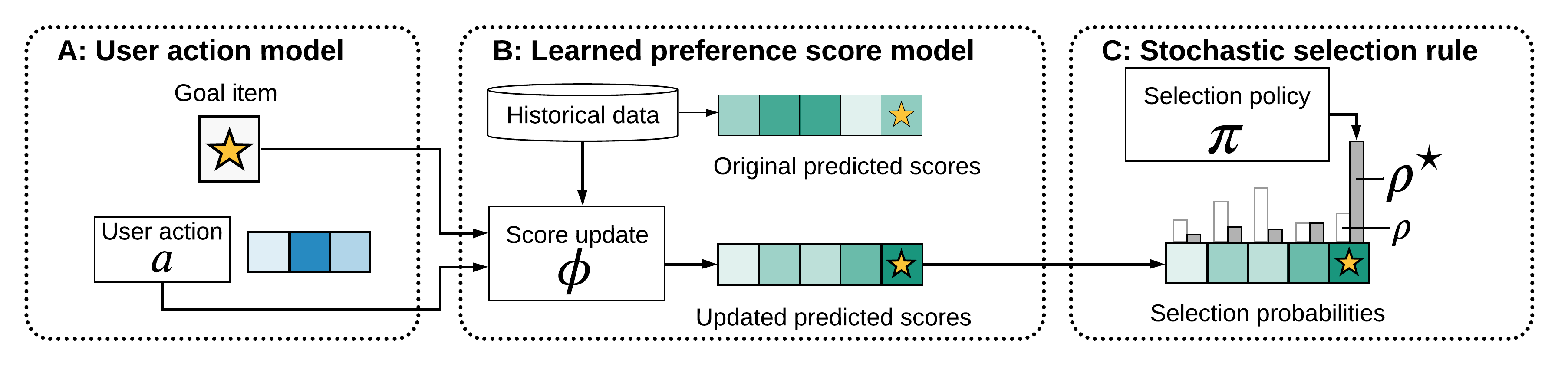}
	\caption{We audit recommender systems under a user action model (A), learned preference model (B), and stochastic selection rule (C). }
	\label{fig:pipeline}
\end{figure*}

\section{Metrics Based on Reachability} \label{sec:problem_setting}

\subsection{Stochastic Recommender Setting}

We consider systems composed of $\nusers$ individuals as well as a collection of $\nitems$ pieces of content.
For consistency with the recommender systems literature, we refer to individuals as users, pieces of content as items, and expressed preferences as ratings.
We will denote a rating by user $u$ of item $i$ as $r_{ui}\in \calR$, where $\calR\subseteq\R$ denotes the space of values which ratings can take. For example, ratings corresponding to the percentage of a video watched would have $\calR=[0,1]$ while discrete star ratings would have $\calR=\{1,2,3,4,5\}$.
The number of \emph{observed ratings} will generally be much smaller than the total number of possible ratings, and we denote by $\Omega_u\subseteq \{1,\dots,m\}$ the set of items seen by the user $u$.
The goal of a recommendation system is to understand the preferences of users and recommend relevant content.

In this work, we focus on the common setting in which recommenders are the composition of a  \emph{scoring function} $\phi$ with \emph{selection rule} $\pi$ (Figure~\ref{fig:pipeline}).
The scoring function models the preferences of users.
It is constructed based on historical data (e.g. observed ratings, user/item features) and returns a score for each user and item pair.
For a given user $u$ and item $i$, we denote $s_{ui}\in\R$
to be the associated score, and for user $u$ we will denote by $\vec s_u\in\R^\nitems$ the vector of scores for all items.
A common example of a scoring function
is a machine learning model which predicts future ratings based on historical data.

We will focus on the way that scores are updated after a round of user interaction.
For example, if a user consumes and rates several new items, the recommender system should update the scores in response.
Therefore, we parameterize the score function by an update rule, so that the new score vector is $\vec s_u^+ = \phi_u(\vec a)$, where $\vec a\in\calA_u$ represents actions taken by user $u$ and $\calA_u$ represents the set of all possible actions.
Thus $\phi_u$ encodes the historical data, the preference model class, and the update algorithm.
The action space $\calA_u$ represents possibilities for system interaction, encoding for example limitations due to user interface design.
We define the form of the score update function and discuss the action space in more detail in Section~\ref{sec:computing_reachability}.

The selection rule $\pi$ is a policy which, for given user $u$ and scores $\vec s_u$, selects one or more items from a set of specified \emph{target items} $\Omega_u^t\subseteq \{1,\dots,m\}$ as the next recommendation.
The simplest selection rule is a top-1 policy,
 which is a deterministic rule that selects the item with the highest score for each user. A simple stochastic rule is the $\epsilon$-greedy policy which with probability $1-\epsilon$ selects the top scoring item and with probability $\epsilon$ chooses uniformly from the remaining items.
Many additional approaches to recommendation can be viewed as the composition of a score function with a selection policy.
This setting also encompasses implicit feedback scenarios, where clicks or other behaviors are defined as or aggregated into ``ratings.''
Many recommendation algorithms, even those not specifically motivated by regression, include an intermediate score prediction step, e.g. point-wise approaches to ranking.
Further assumptions in Section~\ref{sec:computing_reachability} will not capture the full complexity of other techniques such as pairwise ranking and slate-based recommendations.
We leave such extensions to future work.

In this work, we are primarily interested in
stochastic policies which select items according to a probability distribution on the scores $\vec s_u$ parametrized by a exploration parameter.
Policies of this form are often used to balance exploration and exploration in online or sequential learning settings.
A stochastic selection rule recommends an item $i$ according to $ \mathbb{P}\left(\pi(\vec s_u, \Omega^t_u) = i\right)$, which is 0 for all  non-target items $i\notin\Omega_u^t$.
For example, to select among items that
have not yet been seen by the user, the target items are set as $\Omega_u^t=\Omega_u^{\comp}$ (recalling that $\Omega_u$ denotes the set of items seen by the user $u$).
Deterministic policies are
a special case of stochastic policies, with a degenerate distribution.

Stochastic policies have been proposed in the recommender system literature to improve diversity~\cite{christoffel2015blockbusters} or efficiently explore in a sequential setting~\cite{kawale2015efficient}.
By balancing exploitation of items with high predicted ratings against explorations of items with lower predictions, preferences can be estimated so that future predicted ratings are more accurate.
However, our work decidedly does not take a perspective based on accuracy.
Rather than supposing that users' reactions are predictable, we consider a perspective centered on agency and access.

\subsection{Reachability}
First defined in the context of recommendations by~\citet{dean2020recommendations}, an item $i$ is \emph{deterministically reachable} by a user $u$ if
there is some allowable modification to the user's ratings $\r_u$ that causes item to be recommended. Allowable modifications can include history edits, such as
removing or changing ratings of previously rated items. They can also include future looking modifications which assign ratings to a subset of unseen items.

In the setting where recommendations are made stochastically, we define an item $i$ to be \emph{$\rho$ reachable} by a user $u$ if
 there is some allowable action $\vec a$ such that {the updated probability that item $i$ is recommended after applying action $\vec a$ }; $\mathbb{P}(\pi(\phi_u(\vec a),\Omega_u^t)=i) $ is larger than $\rho$.
The maximum $\rho$ reachability for a user-item
 pair is defined as the solution to the following optimization problem:
 \begin{align}
	\begin{split}\label{eq:general_recourse}
	\rho^\star(u,i) = \max_{ \vec a \in \calA_u} ~~ &P(\pi(\phi_u(\vec a), \Omega_u^t)=i) .
	\end{split}
	\end{align}
We will also refer to $\rho^\star(u,i)$ as ``max reachability.''

For example, in the case of $\varepsilon$-greedy policy, $\rho^\star(u, i)=1-\varepsilon$ if item $i$ is deterministically reachable by user $u$, and is $\varepsilon / (|\Omega_u^{t}|-1)$ otherwise.

By measuring the maximum achievable probability of recommending an item to a user, we are characterizing a granular metric of \emph{access} within the recommender system.
It can also be viewed as an upper bound on the likelihood of recommendation with minimal assumptions about user behavior.
It may be illuminating to contrast this measure with a notion of expected reachability.
Computing expected reachability would require specifying the {distribution} over user actions, which would amount to modelling human behavior.
In contrast, max reachability requires specifying only the constraints arising from system design choices to define $\calA_u$ (e.g. the user interface).
By computing max reachability, we focus our analysis on the design of the recommender system, and avoid conclusions which are dependent on behavioral modelling choices.

Two related notions of user agency with respect to a target item $i$ are \emph{lift} and \emph{rank gain}. The lift measures the ratio between the maximum achievable probability of recommendation and the baseline:
\begin{align}
	\lambda^\star(u,i) = \frac{\rho^\star(u,i)}{\rho_0(u,i)}
\end{align}
where the baseline $\rho_0(u,i)$ is defined to capture the default probability of recommendation in the absence of strategic behavior, e.g. $P \left(\pi \left(\vec s_u, \Omega_u^t\right)=i\right)$.

The rank gain for an item $i$ is the difference in the ranked position of the item within the original list of scores $\vec s_u$ and its rank within the updated list of scores $\vec s_u^+$.

Lift and rank gain are related concepts, but ranked position is combinatorial in nature and thus difficult to optimize for directly.
They both measure agency because they compare the default behavior of a system to its behavior under a strategic intervention by the user.
Given that recommenders are designed with personalization in mind, we view the ability of users to influence the model in a positive light.
This is in contrast to much recent work in robust machine literature where strategic manipulation is undesirable.

\subsection{Diagnosing System Limitations}

The analysis of stochastic reachability can be used
to audit recommender systems and diagnose systemic biases {from an interventional perspective (Figure~\ref{fig:bias_diagram})}.
Unlike studies of observational bias, these analyses take into account system interactivity. Unlike studies of closed-loop bias, there is no dependence on a behavior model.
Because max reachability considers the best case over possible actions, it isolates structural biases from those caused in part by user behavior.

Max reachability is a metric defined for each user-item pair, and disparities across users and items can be detected through aggregations.
Aggregating over target items gives insight into a user's ability to discover content, thus detecting users who have been ``pigeonholed'' by the algorithm.
Aggregations over users can be used to compare how the system makes items available for recommendation.

We define the following user- and item-based aggregations:
\begin{align}
\begin{split}\label{eq:disc_avail}
D_u  = \sum_{i\in\Omega_u^t} \frac{\mathbf{1}\{ \rho_{ui}> \rho_t\} }{|\Omega_{u}^t|},~~
A_i = \frac{\sum_u \rho_{ui} \mathbf{1}\{ i\in \Omega_u^t\}}{\sum_u \mathbf{1}\{ i\in \Omega_u^t\}}
\end{split}
\end{align}

The discovery $D_u$ is the proportion of target items that have a high chance of being recommended, as determined by the threshold $\rho_t$. A natural threshold is the better-than-uniform threshold, $\rho_t=1/|\Omega_{u}^t|$, recalling that $\Omega_u^t$ is the set of target items. When $\rho_{ui}=\rho_0(u,i)$, baseline discovery counts the number of items that will be recommended with better-than-uniform probability and is determined by the spread of the recommendation distribution.
When $\rho_{ui}=\rho^\star(u,i)$, discovery counts the number of items that a user \emph{could} be recommended with better-than-uniform probability in the best case.
Low best-case discovery means that the recommender system inherently limits user access to content.

The item availability $A_i$ is the average likelihood of recommendation over all users who have item $i$ as a target.
It can be thought of as the chance that a uniformly selected user will be recommended item $i$. When $\rho_{ui}=\rho_0(u,i)$, the baseline availability measures the prevalence of the item in the recommendations. When $\rho_{ui}=\rho^\star(u,i)$, availability measures the prevalence of an item in the best case.
Low best-case availability means that the recommender system inherently limits the distribution of a given item.

\section{Computing Reachability}\label{sec:computing_reachability}

\subsection{Affine Recommendation}

In this section, we consider a restricted class of recommender systems for which the max reachability problem can be efficiently solved via convex optimization.

\paragraph{User action model}\label{sec:action_model}
We suppose that users interact with the system through expressed preferences, and thus
actions are updates to the vector $\r_u \in \calR^{\nitems}$,
a sparse vector of observed ratings.
For each user, the action model is based on distinguishing between \emph{action} and \emph{immutable} items.

Let $\Omega^{\mathcal A}_u$ denote the set of action items for which the ratings can be strategically modified by the user $u$. %
Then the action set $\calA_u = \calR^{|\Omega^{\mathcal A}_u|}$ corresponds to changing or setting the value of these ratings.
Figure \ref{fig:action_space} provides an illustration.
The action set should be defined to correspond to the interface
through which a user interacts with the recommender system.
For example, it could correspond to a display panel of ``previously viewed'' or ``up next'' items.

\begin{figure}
    \center
    \includegraphics[width=0.5\columnwidth]{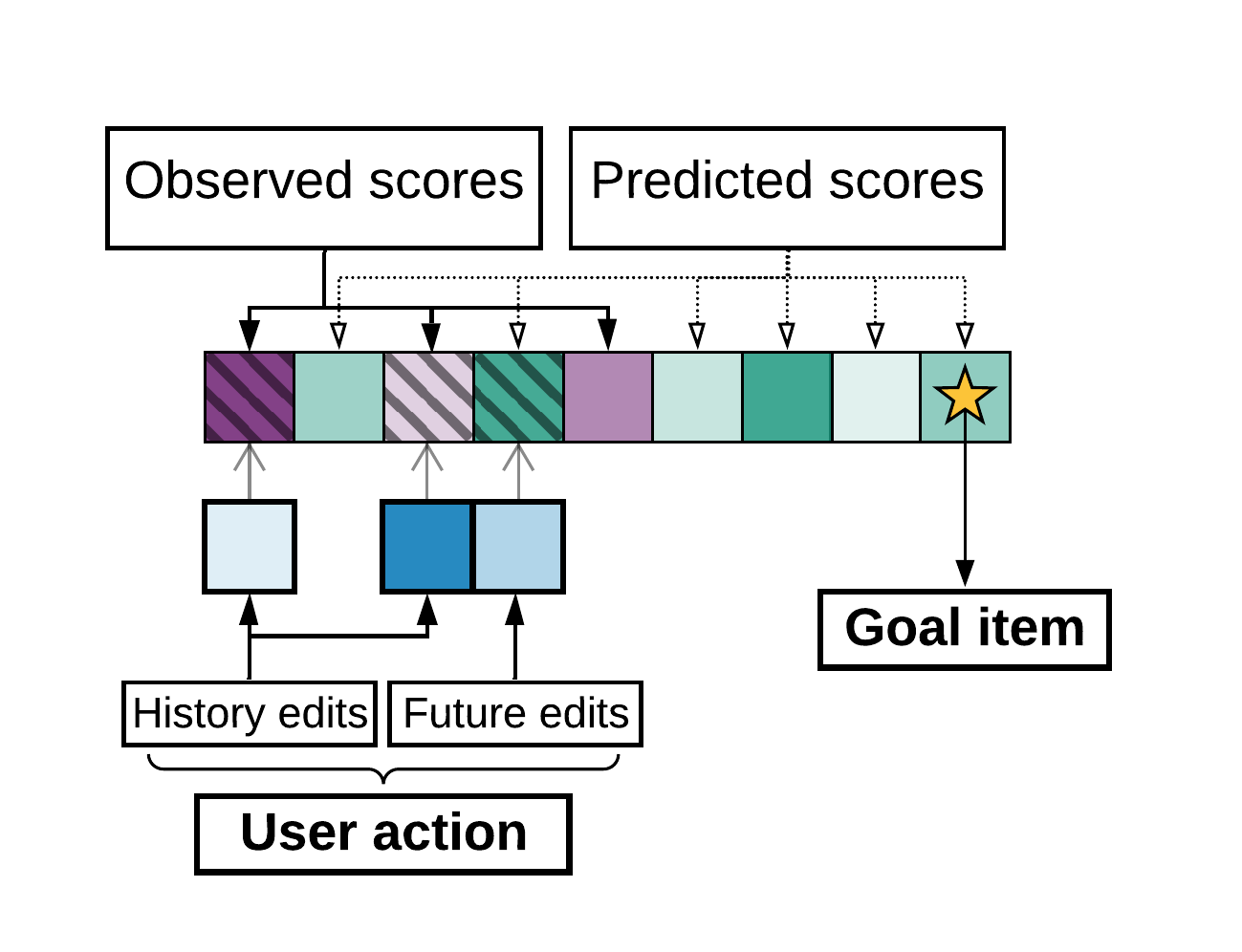}
    \caption{User action space: The shade represents the magnitude of historical (purple) or predicted (green) rating. The \emph{action items} are marked with diagonal lines; they can be strategically modified to maximize the recommendation probability of the \emph{goal item} (star). The value of the user action is shaded in blue.} \label{fig:action_space}
\end{figure}

The updated rating vector $ \r_u^+\in\calR^m$ is equal to $\r_u$ at the indices corresponding to immutable items and equal to the action $\vec a$ at the action items.
Note the partition into action and immutable is distinct from earlier partition of items into
observed and unobserved; action items can be both seen (history edits) and unseen (future reactions), as illustrated in
Figure \ref{fig:pipeline} (A).
For the reachability problem, we will consider a set of target items $\Omega_u^t$ that does not intersect with the action items $\Omega_u^\calA$. Depending on the specifics of the recommendation setting, we may also require that it does not intersect with the previously rated items $\Omega_u$.

We remark that additional user or item features used for scoring and thus recommendations could be incorporated into this
framework as either mutable or immutable features.
The only computational difficulty arises when mutable features are discrete or categorical.

\paragraph{Recommender model}
The recommender model is composed of a scoring function $\phi$ and a selection function $\pi$, which we now specify.
We consider \emph{affine score update functions} where for each user, scores are determined by an affine function of the action:
$\vec s_u^+=\phi_u(\vec a) = B_u\vec a + {\vec c}_{u}$
where $B_u\in\R^{\nitems\times |\Omega_u^\calA|}$ and $\vec c_u\in \R^{\nitems}$ are model parameters determined in part by historical data.
Such a scoring model arises from a variety of preference models, as shown in the examples in Section~\ref{sec:affine_examples}.

We now turn to the selection component of the recommender, which translates the score $\vec s_u$ into a probability distribution over target items.
The stochastic policy we consider is:
\begin{definition} Soft-max selection \\
	For $i\in\Omega_u^t$, the probability of item selection is given by
	\[P(\pi_\beta(\vec s_u,\Omega_u^t) = i) = \frac{e^{\beta s_{ui}}}{\sum_{j\in\Omega_u^t}e^{\beta s_{uj}}}\:.\]
\end{definition}

This form of stochastic policy samples an item according to a Boltzmann distribution defined by the predicted scores (Figure ~\ref{fig:pipeline}C). Distributions of this form are common in machine learning applications, and are known as Boltzmann sampling in reinforcement learning or online learning settings~\cite{wei2017reinforcement,cesa2017boltzmann}.

\subsection{Convex Optimization}

We now show that under affine score update models and soft-max selection rules, the maximum stochastic reachability problem can be solved by an equivalent convex problem.
First notice that for a soft-max selection rule with parameter $\beta$, we have that
\[\log\left(P(\pi_\beta(\vec s_u, \Omega_u^t)=i)\right) = \beta s_{ui} - \lse_{\substack{j\in\Omega_u^t}}\left(\beta s_{uj}\right)\]
where $\lse$ is the log-sum-exp function.

Maximizing stochastic reachability is equivalent to minimizing its negative log-likelihood.
Letting $\vec b_{ui}$ denote the $i$th row of the action matrix $B_u$ and substituting the form of the score update rule, we have the equivalent optimization problem:
\begin{align}
\begin{split}\label{eq:linear_recourse}
\min_{\vec a \in\calA_u} &  \lse_{\substack{j\in\Omega_u^t}} \left(\beta(\vec b_{uj}^\top \vec a+c_{uj})\right) - \beta (\vec b_{ui}^\top \vec a + c_{ui})
\end{split}
\end{align}
If the optimal value to~\eqref{eq:linear_recourse} is $\gamma^\star(u,i)$, then the optimal value for~\eqref{eq:general_recourse} is given by $\rho^\star(u,i) = e^{-\gamma^\star(u,i)}$.

The objective in~\eqref{eq:linear_recourse} is convex because log-sum-exp is a convex function, affine functions are convex, and the composition of a convex and an affine function is convex.
Therefore, whenever the action space $\calA_u$ is convex, so is the optimization problem.
The size of the decision variable scales with the dimension of the action, while the objective function relies on a matrix-vector product of size $|\Omega_u^t| \times |\calA_u|$.
Being able to solve the maximum reachability problem quickly is of interest, since auditing an entire system requires computing $\rho^\star$ for many user and item pairs.

\subsection{Examples} \label{sec:affine_examples}

In this section we review examples of common preference models and show how the score updates have an affine form.

\begin{example}
Matrix factorization models compute scores as rating predictions so that $S= PQ^\top$, where $P\in\mathbb{R}^{\nusers\times \latentdim}$ and $Q\in\mathbb{R}^{\nitems\times \latentdim}$ are respectively user and item factors for some latent dimension $\latentdim$. They are learned via the optimization
\[\min_{P,Q} \sum_{u}\sum_{i\in\Omega_u}\|\vec p_u^\top \vec q_i - r_{ui}\|_2^2 \:.\]
Under a stochastic gradient descent minimization scheme with step size $\alpha$, the one-step
update rule for a user factor is
\[\vec p^+_u = \vec p_u - \alpha \sum_{i\in\Omega_u^\calA}(\vec q_i\vec q_i^\top \vec p_u - \vec q_i r_{ui})\:,\]
Notice that this expression is affine in the action items.
Therefore, we have an affine score function:
\[\phi_u(\vec a) = Q\vec p_u^+=Q \left( \vec p_u - \alpha Q_\calA^\top Q_\calA \vec p_u - \alpha Q_\calA^\top \vec a \right)\]
where we define $Q_\calA = Q_{\Omega_u^\calA}\in \mathbb{R}^{|\Omega_u^\calA|\times d}$.
Therefore,
\[B_u = - \alpha Q Q_\calA^\top,\quad
\vec c_u = Q \left( \vec p_u - \alpha Q_\calA^\top Q_\calA \vec p_u\right)\:.\]
\end{example}

\begin{example}
Neighborhood models compute scores as rating predictions by a weighted average, with:
\[s_{ui} = \frac{\sum_{j\in\mathcal N_i} w_{ij} r_{uj}}{\sum_{j\in\mathcal N_i} |w_{ij}|}\]
where $w_{ij}$ are weights representing similarities between items and $\mathcal N_i$ is a set of indices of previously rated items in the neighborhood of item $i$. Regardless of the details of how these parameters are computed, the predicted scores are a linear function of observed scores: $\vec s_u = W \vec r_u$.

Therefore, the score updates take the form
\[\phi_u(\vec a) = W\vec r_u^+=\underbrace{W\vec r_u}_{\vec c_u} + \underbrace{W E_{\Omega_u^\calA}}_{B_u} \vec a\]
where $E_{\Omega_u^\calA}$ selects rows of $W$ corresponding to action items.
\end{example}

In both examples, the action matrices can be decomposed into two terms.
The first is a term that depends only on the preference model (e.g. item factors $Q$ or weights $W$), while the second is dependent on the user action model (e.g. action item factors $Q_\calA$ or action selector $E_{\Omega_u^\calA}$).

For simplicity of presentation, the examples above leave out model bias terms, which are common in practice.
Incorporating these model biases changes only the definition of the affine term in the score update expression.
We include the full action model derivation with biases in Appendix~\ref{app:examples}, along with additional examples. %

\section{Geometry of Reachability} \label{sec:geometry}

In this section, we explore the connection between stochastic and deterministic reachability to illustrate how both randomness and agency contribute to discovery as defined by the max reachability metric.
We then argue by example that it is possible to design preference models that guarantee deterministic reachability, and that doing so does not induce accuracy trade-offs.

\subsection{Connection to Deterministic Recommendation}
We now explore how the softmax style selection rule is a relaxation of top-$1$ recommendation.
For larger values of $\beta$, the selection rule distribution becomes closer to the deterministic top-1 rule.
This also means that the stochastic reachability problem can be viewed as a relaxation of the top-1 reachability problem.
In stochastic settings it is relevant to inquire the extent to which randomness impacts discovery and availability.
In the deterministic setting, the reachability of an item to a user is closely tied to agency---the ability of a user to influence their outcomes.
The addition of randomness induces exploration, but not in a way that is controllable by users.
In the following result, we show how this trade-off manifests in the max reachability metric itself.

\begin{proposition}\label{prop:stoch_det_connection}
Consider the stochastic reachability problem for a $\beta$-softmax selection rule as $\beta \to \infty$.
Then if an item $i$ is top-1 reachable by user $u$, $\rho^\star(u,i)\to1$. In the opposite case
that item $i$ \emph{is not} top-1 reachable, we have that $\rho^\star(u,i) \to0$.
\end{proposition}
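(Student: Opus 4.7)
The plan is to argue directly from the closed-form
\[P(\pi_\beta(\phi_u(\vec a),\Omega_u^t) = i) \;=\; \frac{1}{1 + \sum_{j \in \Omega_u^t \setminus \{i\}} \exp\!\bigl(\beta(s_{uj}(\vec a) - s_{ui}(\vec a))\bigr)},\]
and analyze its behavior as $\beta \to \infty$ while tracking the dependence on $\vec a$. The key quantity that decides the limit is the gap
\[\Delta(\vec a) \;:=\; \max_{j \in \Omega_u^t \setminus \{i\}} \bigl(s_{uj}(\vec a) - s_{ui}(\vec a)\bigr).\]

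First I would handle the top-1 reachable case. By definition there exists a witness $\vec a^* \in \calA_u$ with $\Delta(\vec a^*) = -\delta$ for some $\delta > 0$. Instantiating the closed form at $\vec a^*$ and upper-bounding each of the at most $|\Omega_u^t|-1$ exponential summands by $e^{-\beta\delta}$ gives
\[P\bigl(\pi_\beta(\phi_u(\vec a^*),\Omega_u^t) = i\bigr) \;\geq\; \frac{1}{1+(|\Omega_u^t|-1)\,e^{-\beta\delta}},\]
which tends to $1$ as $\beta \to \infty$. Since $\rho^\star(u,i) \geq P(\pi_\beta = i \mid \vec a^*)$ and $\rho^\star(u,i) \leq 1$ trivially, the sandwich gives $\rho^\star(u,i) \to 1$.

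For the opposite case, I would read ``not top-1 reachable'' as the statement that $\Delta^* := \inf_{\vec a \in \calA_u} \Delta(\vec a) > 0$, i.e.\ no action makes $i$ weakly top with a uniform positive gap. Then for every $\vec a \in \calA_u$, retaining only the dominant term in the denominator gives
\[P\bigl(\pi_\beta(\phi_u(\vec a),\Omega_u^t) = i\bigr) \;\leq\; \frac{1}{1 + e^{\beta\Delta(\vec a)}} \;\leq\; \frac{1}{1 + e^{\beta\Delta^*}}.\]
Taking the supremum over $\vec a \in \calA_u$ preserves the bound, and sending $\beta\to\infty$ drives it to $0$, hence $\rho^\star(u,i) \to 0$.

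The main obstacle is justifying $\Delta^* > 0$ in the second case. Under the affine score model of Section~\ref{sec:computing_reachability}, $\Delta(\vec a)$ is a pointwise maximum of affine functions and therefore convex in $\vec a$, so if $\calA_u$ is compact convex the infimum is attained and ``not top-1 reachable'' cleanly means $\Delta^* > 0$. The genuinely delicate situation is the boundary case where some action achieves $\Delta(\vec a) = 0$ but none achieves $\Delta(\vec a) < 0$; this tie-scenario must be assigned to one case or the other in the underlying definition of top-1 reachability (and the corresponding deterministic tie-breaking rule accounted for) before the dichotomy in the proposition is exhaustive.
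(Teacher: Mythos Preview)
Your proposal is correct and follows essentially the same route as the paper: both reduce the limit to the sign of the gap $\Delta(\vec a) = \max_{j\neq i}\bigl(s_{uj}(\vec a) - s_{ui}(\vec a)\bigr)$, with the paper phrasing this via the log-sum-exp asymptotic $\tfrac{1}{\beta}\gamma_\beta(\vec a)\to \Delta(\vec a)$ rather than your explicit sandwich bounds on the probability. Your treatment is in fact more careful than the paper's, which simply asserts $\gamma^\star\to 0$ or $\gamma^\star\to\infty$ without addressing the compactness needed for $\Delta^*>0$ or the tie case you correctly flag.
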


\begin{proof}
Define
\[\gamma_\beta(\vec a) = \lse_{\substack{j\in\Omega_u^t}} \left(\beta\phi_{uj}(\vec a)  \right)  - \beta \phi_{ui}(\vec a) \]
and see that $\rho_{ui}(\vec a) = e^{-\gamma_\beta(\vec a)}$. 
Then we see that
\[\lim_{\beta\to\infty} \frac{1}{\beta}\gamma_\beta(\vec a) = \max_{\substack{j\notin\Omega_u}} \left(\phi_{uj}(\vec a)\right) - \phi_{ui}(\vec a) \] 
yields a top-$1$ expression.
If an item $i$ is top-1 reachable for user $u$, then there is some $\vec a$ such that the above expression is equal to zero. Therefore, as $\beta \to \infty$, $\gamma^\star \to 0 $, hence $\rho^\star\to1$. In the opposite case
	when an item \emph{is not} top-1 reachable we have that $\gamma^\star \to \infty$, hence $\rho^\star\to0$.
\end{proof}

This connection yields insight into the relationship between max reachability, randomness, and agency in stochastic recommender systems.
For items which are top-$1$ reachable, larger values of $\beta$ result in larger $\rho^\star$, and in fact the largest possible max reachability is attained as $\beta\to\infty$, i.e. there is no randomness.
On the other hand, if $\beta$ is too large, then items which are not top-$1$ reachable will have small $\rho^\star$.
There is some optimal finite $\beta\geq 0$ that maximizes $\rho^\star$ for top-1 unreachable items.
Therefore, we see a delicate balance when it comes to ensuring access with randomness.

Viewed in another light, this result says that for a fixed $\beta\gg1$, deterministic top-$1$ reachability ensures that $\rho^\star$ will be close to $1$.
We explore this perspective in the next section.

\subsection{Reachability Without Sacrificing Accuracy}

Specializing to affine score update models, we now highlight how parameters of the preference and action models play a role in determining max reachability.
Building on the connection to deterministic reachability, we make use of results about model and action space geometry from~\citet{dean2020recommendations}.
We recall the definition of the convex hull.

\begin{definition}[Convex hull]
The \emph{convex hull} of a set of vectors $\mathcal V = \{\vec v_i\}_{i=1}^n$ is defined as 
\[\mathrm{conv}\left(\mathcal V\right) = \left\{\sum_{i=1}^n w_i \vec v_i \mid \vec w\in\mathbb{R}^n_+,~~\sum_{i=1}^n w_i = 1\right\}\:.\]
A point $\vec v_j\in\mathcal V$ is a \emph{vertex} of the convex hull if
\[\vec v_j \notin \mathrm{conv}\left(\mathcal V\setminus \{\vec v_j\}\right)\:.\]
\end{definition}

\begin{proposition}\label{prop:conv_condition}
If $\vec b_{ui}$ is a vertex on the convex hull of $\{\vec b_{uj}\}_{j\in\Omega_u^t}$ and actions are real-valued,
then $\rho_{ui}^\star\to1$ as $\beta\to\infty$.
\end{proposition}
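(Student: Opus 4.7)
The plan is to reduce the claim to Proposition~\ref{prop:stoch_det_connection} by showing that the vertex hypothesis implies deterministic top-$1$ reachability of item $i$ for user $u$. Once we exhibit an action $\vec a\in\calR^{|\Omega_u^\calA|}$ with $\phi_{ui}(\vec a) > \phi_{uj}(\vec a)$ for every $j\in\Omega_u^t\setminus\{i\}$, item $i$ is top-$1$ reachable (when the target set is $\Omega_u^t$), and the previous proposition gives $\rho^\star_{ui}\to 1$ as $\beta\to\infty$.

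The first step is to extract a separating direction from the vertex assumption. Since $\vec b_{ui}$ is a vertex of $\mathrm{conv}\{\vec b_{uj}\}_{j\in\Omega_u^t}$, it does not lie in the convex hull of the remaining points $\{\vec b_{uj}\}_{j\in\Omega_u^t\setminus\{i\}}$. Applying a standard separating hyperplane argument between the point $\vec b_{ui}$ and this (closed, convex, finite) hull yields a direction $\vec a_0\in\R^{|\Omega_u^\calA|}$ and a scalar $\delta>0$ such that
\begin{equation*}
\vec b_{ui}^\top \vec a_0 \;\geq\; \vec b_{uj}^\top \vec a_0 + \delta \qquad \text{for all } j\in\Omega_u^t\setminus\{i\}.
\end{equation*}
The hypothesis that actions are real-valued, i.e.\ $\calA_u = \R^{|\Omega_u^\calA|}$, is precisely what allows us to use this direction without combinatorial restrictions on $\vec a$.

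The second step is to absorb the bias terms $c_{uj}$ by scaling. Set $\vec a_t = t\,\vec a_0$ for $t>0$. Then
\begin{equation*}
\phi_{ui}(\vec a_t) - \phi_{uj}(\vec a_t) \;=\; t\,(\vec b_{ui}-\vec b_{uj})^\top \vec a_0 + (c_{ui}-c_{uj}) \;\geq\; t\,\delta + (c_{ui}-c_{uj}),
\end{equation*}
so for $t$ sufficiently large this quantity is strictly positive for every $j\in\Omega_u^t\setminus\{i\}$ (there are only finitely many such $j$). Hence item $i$ achieves the strict top score over $\Omega_u^t$ under $\vec a_t$, establishing deterministic top-$1$ reachability.

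Finally, invoke Proposition~\ref{prop:stoch_det_connection}: top-$1$ reachability of $i$ by $u$ implies $\rho^\star(u,i)\to 1$ as $\beta\to\infty$, which is the claim. The only subtlety, and the step most worth stating carefully, is the separating hyperplane argument that produces $\vec a_0$ from the vertex condition; after that the argument is a routine scaling plus an appeal to the preceding proposition.
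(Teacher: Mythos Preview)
Your proof is correct. Both your argument and the paper's reduce to Proposition~\ref{prop:stoch_det_connection} after establishing top-$1$ reachability from the vertex hypothesis, but the mechanism for that reduction differs. The paper formulates top-$1$ reachability as feasibility of a linear program $D_{ui}\vec a \geq \vec f_{ui}$ and argues via LP duality: the dual constraint $D_{ui}^\top\vec\lambda=0$, $\vec\lambda\geq 0$ forces $\vec\lambda=0$ precisely because a nonzero $\vec\lambda$ would express $\vec b_{ui}$ as a convex combination of the other $\vec b_{uj}$, contradicting the vertex property. You instead invoke the separating hyperplane theorem directly to produce a direction $\vec a_0$ with uniform margin $\delta$, then scale to swamp the affine offsets. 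These are two faces of the same Farkas-type fact; your route is arguably more elementary since it bypasses the LP formalism, while the paper's version makes the connection to the linear-programming structure of~\citet{dean2020recommendations} explicit.
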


\begin{proof}
We begin by showing that if $\vec b_{ui}$ is a vertex on the convex hull of $\mathcal B = \{\vec b_{uj}\}_{j\in\Omega_u^t}$, then item $i$ is top-1 reachable.
This argument is similar to the proof of Results 1 and 2 in~\cite{dean2020recommendations}.

Item $i$ is top-1 reachable if there exists some $\vec a\in\R^{|\Omega_u^\calA|}$ such that $\vec b_{ui}^\top \vec a + c_{ui} \geq \vec b_{uj}^\top \vec a + c_{uj}$ for all $j\neq i$.
Therefore, top-1 reachability is equivalent to the feasibility of the following linear program
\begin{align*}
\min&~ 0^\top \vec a\\
\text{s.t.}&~  D_{ui}\vec a \geq \vec f_{ui}
\end{align*}
where $D_{ui}$ has rows given by $\vec b_{ui} - \vec b_{uj}$ and $\vec f_{ui}$ has entries given by $c_{uj}-c_{ui}$ for all $j\in\Omega_u^t$ with $j\neq i$.
Feasibility of this linear program is equivalent to boundedness of its dual:
\begin{align*}
\max&~  \vec f_{ui}^\top \vec \lambda\\
\text{s.t.}&~  D_{ui}^\top \vec \lambda =0,~~\vec \lambda\geq 0.
\end{align*}
We now show that if $\vec b_{ui}$ is a vertex on the convex hull of $\mathcal B$, then the dual is bounded because the only feasible solution is $\vec\lambda=0$.
To see why, notice that
\[D_{ui}^\top \vec \lambda =0 \iff \vec b_{ui} \sum_{\substack{j\in\Omega_u^t\\ j\neq i}} \lambda_j = \sum_{\substack{j\in\Omega_u^t\\ j\neq i}} \lambda_j\vec b_{uj} \]
If this expression is true for some $\vec\lambda\neq 0$, then we can write
\[\vec b_{ui}   = \sum_{\substack{j\in\Omega_u^t\\ j\neq i}} w_j\vec b_{uj},\quad w_j = \frac{\lambda_j}{\sum_{\substack{j\in\Omega_u^t\\ j\neq i}}\lambda_j } \implies \vec b_{ui} \in \mathrm{conv}\left(\mathcal B \setminus \{\vec b_{ui} \}\right)\:.\]
This is a contradiction, and therefore it must be that $\vec\lambda=0$ and therefore
the dual is bounded and item $i$ is top-1 reachable.

To finish the proof, we appeal to Proposition~\ref{prop:stoch_det_connection} to argue that since item $i$ is top-1 reachable,
then $\rho_{ui}^\star\to1$ as $\beta\to\infty$.
\end{proof}

This result highlights how the geometry of the score model determines when it is
preferable for the system to have minimal exploration, from the perspective of reachability.

We now consider whether relevant geometric properties of the model are predetermined by the goal of accurate prediction.
Is there a tension between ensuring reachability and accuracy?
We answer in the negative by presenting a construction for the case of matrix factorization models.
Our result shows that
the item and user factors ($P$ and $Q$) can be slightly altered such that all items become top-1 reachable at no loss of predictive accuracy.
The construction expands the latent dimension of the user and item factors by one and relies on a notion of sufficient richness for action items.

\begin{definition}[Rich actions]
For a set of item factors $\{\vec q_j\}_{j=1}^\nitems$, let $C=\max_j \|\vec q_j\|_2$.
Then a set of action items $\Omega_u^\calA\subseteq \{1,\dots,\nitems\}$ is \emph{sufficiently rich} if the vertical concatenation of their item factors and norms is full rank:
\[ \mathrm{rank}{\left(\begin{bmatrix}\vec q_i^\top & \sqrt{C^2 - \|\vec q_i\|_2^2} \end{bmatrix}_{i\in\Omega_u^\calA}\right)}
 = d+1 \:.\]
Notice that this can only be true if $|\Omega_u^\calA|\geq d+1$.
\end{definition}

\begin{proposition}\label{prop:mf_construction}
	Consider the MF model with user factors $P\in \mathbb{R}^{\nusers\times \latentdim}$ and item factors $Q\in \mathbb{R}^{\nitems\times \latentdim}$.
	Further consider any user $u$ with a sufficiently rich set of at least $\latentdim+1$ action items and real-valued actions.
	Then there exist $\tilde P\in\mathbb{R}^{n\times \latentdim+1}$ and $\tilde Q\in \mathbb{R}^{m\times \latentdim + 1}$ such that $PQ^\top = \tilde P \tilde Q^\top$ and under this model,
	$\rho^\star(u,i)\to1$ as $\beta\to\infty$ for all target items $i\in\Omega_u^t$.
\end{proposition}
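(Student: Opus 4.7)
The plan is to combine the convex-hull criterion of Proposition~\ref{prop:conv_condition} with a norm-equalizing rank-one lift of the factor matrices. Recalling from the matrix factorization example that $\vec b_{ui} = -\alpha Q_\calA \vec q_i$, the goal is to replace $Q$ and $P$ by $\tilde Q$ and $\tilde P$ such that the lifted item factors $\tilde{\vec q}_i$ make each $\tilde{\vec b}_{ui} = -\alpha \tilde Q_\calA \tilde{\vec q}_i$ a vertex of $\mathrm{conv}\{\tilde{\vec b}_{uj}\}_{j\in\Omega_u^t}$. Proposition~\ref{prop:conv_condition} then delivers $\rho^\star(u,i)\to 1$ for every target item.

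For the construction I would set
\[\tilde{\vec q}_i = \begin{bmatrix}\vec q_i \\ \sqrt{C^2 - \|\vec q_i\|_2^2}\end{bmatrix}\in \R^{\latentdim+1},\qquad \tilde{\vec p}_u = \begin{bmatrix}\vec p_u \\ 0\end{bmatrix}\in\R^{\latentdim+1}.\]
Because the appended column of $\tilde P$ vanishes, $\tilde P \tilde Q^\top = PQ^\top$, so predictive accuracy is preserved exactly on every user--item pair. By construction $\|\tilde{\vec q}_i\|_2 = C$ for every item, and the lifted action matrix $\tilde Q_\calA\in\R^{|\Omega_u^\calA|\times(\latentdim+1)}$ is precisely the matrix appearing in the sufficient richness hypothesis; thus $\tilde Q_\calA$ has full column rank $\latentdim+1$ and the map $\vec v\mapsto -\alpha\tilde Q_\calA \vec v$ is injective.

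The geometric argument then proceeds in two short steps. First, every $\tilde{\vec q}_i$ lies on the sphere of radius $C$, so if $\tilde{\vec q}_i = \sum_{j\neq i} w_j\tilde{\vec q}_j$ with $w_j\geq 0$ and $\sum_j w_j = 1$, then $C = \|\tilde{\vec q}_i\|_2 \leq \sum_j w_j\|\tilde{\vec q}_j\|_2 = C$, with equality in the triangle inequality only if all participating $\tilde{\vec q}_j$ coincide with $\tilde{\vec q}_i$; so each $\tilde{\vec q}_i$ is a vertex of $\mathrm{conv}\{\tilde{\vec q}_j\}_{j\in\Omega_u^t}$ whenever the target factors are distinct. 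Second, injective linear maps preserve vertices: if $\tilde{\vec b}_{ui}=\sum_{j\neq i}w_j\tilde{\vec b}_{uj}$, the injectivity of $\tilde Q_\calA$ forces $\tilde{\vec q}_i=\sum_{j\neq i}w_j\tilde{\vec q}_j$, contradicting the previous step. Hence $\tilde{\vec b}_{ui}$ is a vertex of $\mathrm{conv}\{\tilde{\vec b}_{uj}\}_{j\in\Omega_u^t}$, and Proposition~\ref{prop:conv_condition} closes the proof.

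The step I expect to require the most care is the distinctness caveat on target item factors: if $\vec q_i = \vec q_j$ for two target items, the lift preserves the coincidence and softmax must split probability mass between them, preventing $\rho^\star\to 1$. The cleanest workaround exploits the fact that the final column of $\tilde P$ is zero, so the $(\latentdim+1)$-th coordinate of each $\tilde{\vec q}_i$ may be altered without affecting $\tilde P\tilde Q^\top$; in particular, flipping the sign of $\sqrt{C^2 - \|\vec q_i\|_2^2}$ separates any pair of originally identical factors while keeping both on the sphere of radius $C$. In a full write-up I would either exclude the degenerate case by hypothesis or use this freedom to make all target $\tilde{\vec q}_i$ distinct; the remaining verifications---the rank hypothesis, the sphere geometry, and the vertex-preservation argument---are mechanical once the lift is in place.
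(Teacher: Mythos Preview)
Your proposal is correct and follows essentially the same route as the paper: the identical norm-equalizing lift $\tilde Q=[Q\;\vec v]$, $\tilde P=[P\;\vec 0]$, the sphere argument for vertices of $\{\tilde{\vec q}_i\}$, and the transfer to $\{\tilde{\vec b}_{ui}\}$ via the full-rank property of $\tilde Q_\calA$ guaranteed by sufficient richness, followed by Proposition~\ref{prop:conv_condition}. Your treatment of the distinctness caveat is in fact more careful than the paper's, which simply assumes the $\vec q_j$ are unique; note, however, that your sign-flip trick separates at most two coincident factors, so for three or more identical $\vec q_j$ you would still need to fall back on the distinctness hypothesis.
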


\begin{proof}
	Let $C$ be the maximum row norm of $Q$ and define $\vec v\in\mathbb{R}^\nitems$ satisfying $v_i^2 = C^2 - \|\vec q_i\|_2^2$. 
	Then we construct modified item and user factors as
	\[\tilde Q =\begin{bmatrix} Q & \vec v\end{bmatrix}, \quad \tilde P =\begin{bmatrix} P & \vec 0\end{bmatrix}\:.\] 
	Therefore, we have that $\tilde P \tilde Q^\top = PQ^\top$.

	Then notice that by construction, each row of $\tilde Q$ has norm $C$, so each  $\tilde{\vec q}_i$ is on the boundary of the $\ell_2$ ball in $\mathbb{R}^{d+1}$. As a result, each $\tilde{\vec q}_i$ is a vertex on the convex hull of $\{\tilde{\vec q}_j\}_{j=1}^n$ as long as all $\vec q_j$ are unique. 

	For an arbitrary user $u$, the score model parameters are given by $\tilde{\vec b}_{ui} =\tilde Q_\calA\tilde{\vec q_i}$. 
	We show by contradiction that as long as the action items are sufficiently rich, each $\tilde{\vec b}_{ui}$ is a vertex on the convex hull of $\{\tilde{\vec b}_{uj}\}_{j=1}^n$.
	Supposing this is not the case for an arbitrary $i$,
	\[\tilde{\vec b}_{ui} = \sum_{\substack{j=1\\j\neq i}}^n w_j \tilde{\vec b}_{uj} \iff \tilde Q_\calA\tilde{\vec q_i} = \sum_{\substack{j=1\\j\neq i}}^n w_j \tilde Q_\calA\tilde{\vec q_i}
	\implies \tilde{\vec q_i} = \sum_{\substack{j=1\\j\neq i}}^n w_j \tilde{\vec q_i}\]
	where the final implication follows because the fact that $\tilde Q_\calA$ is full rank (due to richness) implies that $\tilde Q_\calA^\top \tilde Q_\calA$ is invertible.
	This is a contradiction, and therefore we have that each $\tilde{\vec b}_{ui}$ must be a vertex on the convex hull of $\{\tilde{\vec b}_{uj}\}_{j=1}^n$. 

	Finally, we appeal to Proposition~\ref{prop:conv_condition} to argue that $\rho^\star(u,i)\to1$ as $\beta\to\infty$ for all target items $i\in\Omega_u^t$.
	\end{proof}

The existence of such a construction demonstrates that there is not an unavoidable trade-off between accuracy and reachability in recommender systems.

\section{Audit Demonstration} \label{sec:experiments}
\subsection{Experimental Setup}
\paragraph{Datasets}
We evaluate\footnote{Reproduction code available at \url{github.com/modestyachts/stochastic-rec-reachability}} max  $\rho$ reachability in settings based on three popular recommendation datasets: MovieLens 1M (ML-1M)~\cite{harper2015movielens}, LastFM 360K~\cite{celma2010music} and MIcrosoft News Dataset (MIND)~\cite{wu2020mind}. ML-1M
is a dataset of 1 through 5 explicit ratings of movies, containing over one million recorded ratings; we do not perform any additional pre-processing.
LastFM is an implicit rating dataset containing the number of times a user has listened to songs of an artist. We used the version of the LastFM dataset
preprocessed by \citet{shakespeare2020exploring}.
For computational tractability, we select a random subset of $10\%$ of users and $10\%$ artists and define ratings as $r_{ui} = \log(\# \mathrm{listens}(u,i) + 1)$ to ensure that rating matrices are well conditioned. MIND is an implicit rating dataset containing clicks and impressions data. We use a subset of 50K users and 40K news articles spanning 17 categories and 247 subcategories. We transform news level click data into subcategory level aggregation and define the rating associated with a user-subcategory pair as a function of the number of times that the user clicked on news from that subcategory: $r_{ui}=\log(\# \mathrm{clicks}(u,i) + 1)$.
Table \ref{tbl:datasets} provides summary statistics and
Appendix \ref{app:data_details} contains further details about the datasets and preprocessing steps.

\begin{table}[t]
    \caption{Audit datasets}
    \label{tbl:datasets}
    \vskip 0.15in
    \begin{center}
    \begin{small}
    \begin{sc}
    \begin{tabular}{l|cccr}
    \toprule
    Data set & ML 1M & LastFM 360K & MIND \\
    \midrule
    Users     & 6040 & 13698 & 50000\\
    Items     & 3706& 20109& 247 \\
    Ratings    & 1000209 & 178388 & 670773\\
    Density (\%) & 4.47\% & 0.065\% & 5.54\%\\
    LibFM rmse    & 0.716& 1.122&    0.318     \\
    KNN rmse     &0.756 &  1.868& - \\
    \bottomrule
    \end{tabular}
    \end{sc}
    \end{small}
    \end{center}
    \vskip -0.1in
    \end{table}

\paragraph{Preference models}
We consider two preference models: one based on matrix factorization (MF) as well as a neighborhood based model (KNN). We use the LibFM SGD implementation \cite{rendle:tist2012} for the MF model and use the item-based k-nearest neighbors model implemented by~\citet{krauth2020offline}.  For each dataset and recommender model we perform hyper-parameter tuning using a 10\%-90\% test-train split. We report test performance in Table \ref{tbl:datasets}. See Appendix \ref{app:model_tuning} for details about tuning.
Prior to performing the audit, we retrain the recommender models with the full dataset.

\paragraph{Reachability experiments}
To compute reachability, it is further necessary to specify additional elements of the recommendation pipeline: the user action model, the set of target items, and the soft-max selection parameter.

We consider three types of user action spaces: \emph{History Edits}, \emph{Future Edits}, and \emph{Next K} in which users can strategically modify the ratings associated to $K$ randomly chosen items from their history, $K$ randomly chosen unobserved items, or the top-$K$ items according to the baseline scores of the preference model. For each of the action spaces we consider a range of $K$ values. We further constrain actions to lie in an interval corresponding to the rating range, using $[1,5]$ for movies and $[0,10]$ for music and news.

In the case of movies (ML-1M) we consider target items to be all items that are neither seen nor action items. In the case of music and news recommendations (LastFM \& MIND), the target items are all the items with the exception of action items. %
This reflects an assumption that music created by a given artist or news within a particular subcategory can be consumed repeatedly, while movies are viewed once.

For each dataset and recommendation pipeline, we compute max reachability for soft-max selection rules parametrized by a range of $\beta$ values.
Due to the computational burden of large dense matrices, we compute metrics for a subset of users and target items sampled uniformly at random. For details about runtime, see Appendix \ref{app:reach_complexity}.

\begin{figure}
    \center
    \includegraphics[width=0.8\columnwidth]{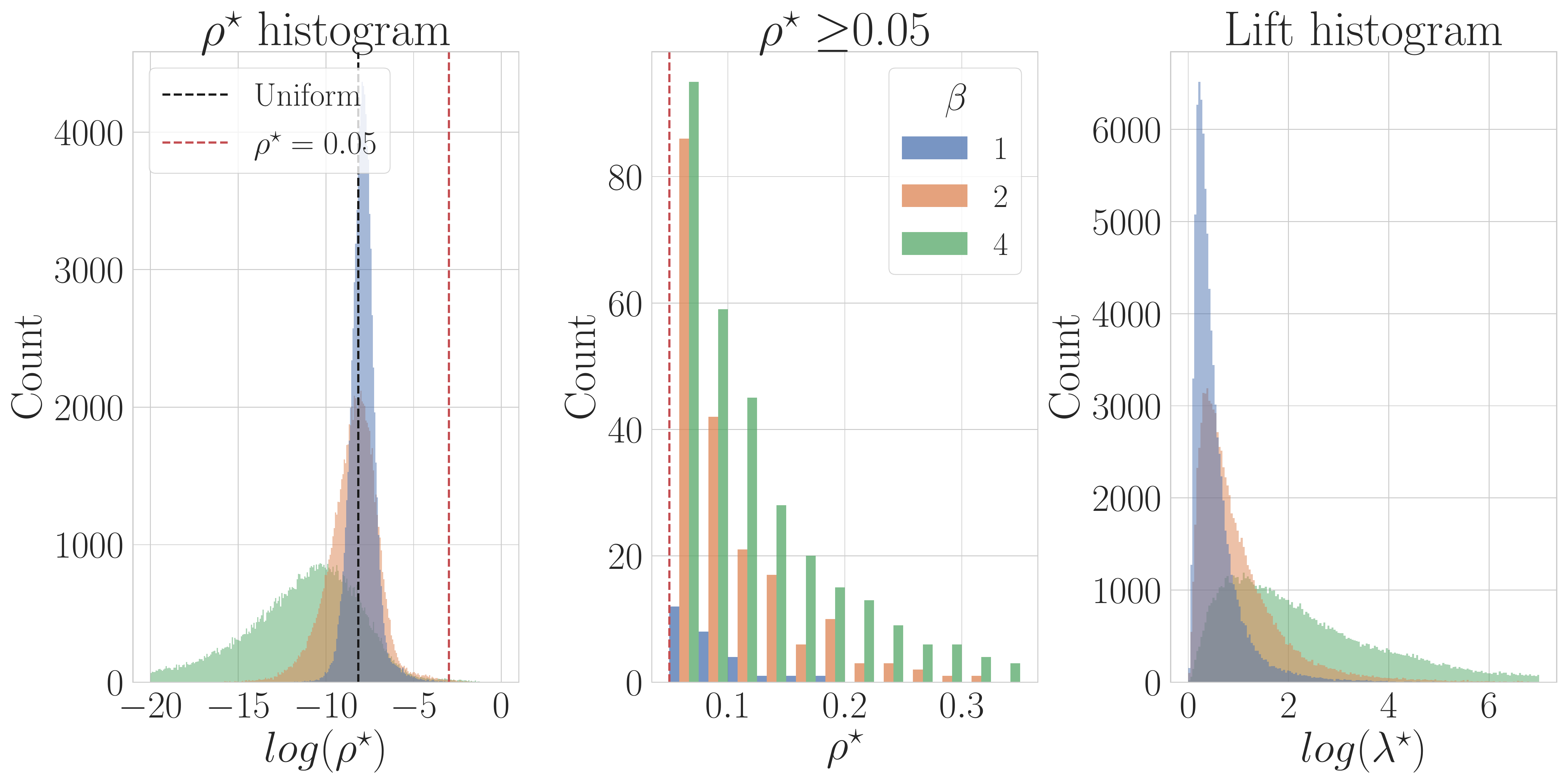}
    \caption{Left: Histogram of log max reachability values for $\beta = [1,2,4]$. Black dotted line denotes $\rho^\star$ for uniformly random recommender.  Center: Histogram of $\rho^\star>0.05$ (red dotted line). Right: Histogram of log-lifts.
    Reachability evaluated on ML-1M for $K=5$ Random Future action space and a LibFM model.} \label{fig:beta_hist}
\end{figure}

\begin{figure}
    \center
    \includegraphics[width=0.8\columnwidth]{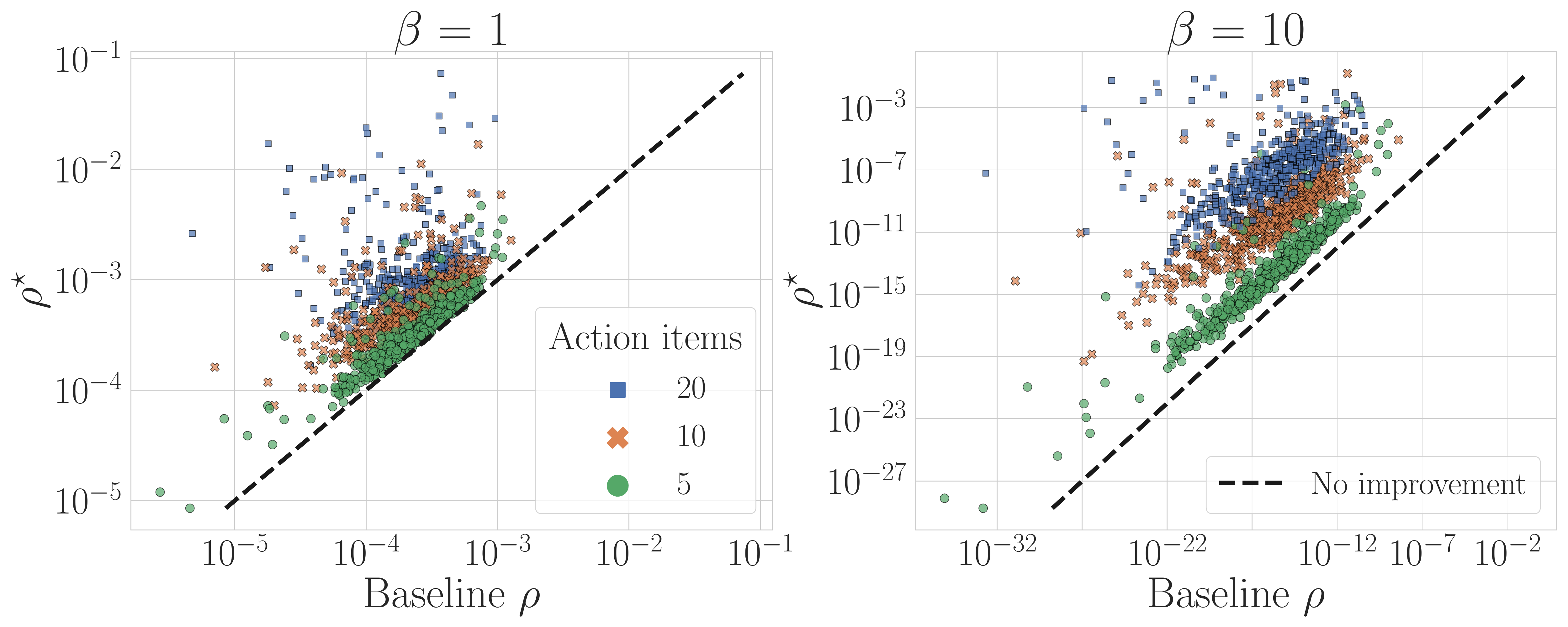}
    \caption{ Log scale scatterplot of $\rho^\star$ values against baseline $\rho$ for $K \in [5,10,20]$. Colors indicate action space size $K$. We compare low (left) and high (right) stochasticity. Reachability evaluated on ML-1M for Random Future action space and a LibFM model.} \label{fig:k_beta_scatter}
\end{figure}

\subsection{Impact of Recommender Pipeline} \label{sec:pipeline}
We begin by examining the role of recommender pipeline components: stochasticity of item selection, user action models, and choice of preference model. All presented experiments in this section use the ML-1M dataset.

These experiments show that more stochastic recommendations correspond to higher average max reachability values, whereas more deterministic recommenders have a more disparate impact, with a small number of items achieving higher $\rho^\star$.
We also see that the impact of the user action space differs depending on the preference model. For neighborhood based preference models, strategic manipulations to the history are most effective at maximizing reachability, whereas manipulations of the items most likely to be recommended next are ineffective.

\paragraph{Role of stochasticity}
We investigate the role of the $\beta$ parameter in the item selection policy. Figure \ref{fig:beta_hist} illustrates the relationship between the stochasticity of the selection policy and max reachability. There are significantly more target items with better than random reachability for low values of $\beta$. However, higher values of $\beta$ yield more items with high reachability potential ($>5\%$ likelihood of recommendation). These items are typically items that are top-1 or close to top-1 reachable. While lower $\beta$ values provide better reachability on average and higher $\beta$ values provide better reachability at the ``top'', higher $\beta$ uniformly out-performs lower $\beta$ values in terms of the lift metric. This suggests that larger $\beta$ corresponds to more user agency, since the relative effect of strategic behavior is larger. However, note that for very large values of $\beta$, high lift values are not so much the effect of improved reachability as they are due to very low baseline recommendation probabilities.

\paragraph{Role of user action model}
We now consider different action space sizes.
In Figure \ref{fig:k_beta_scatter} we plot max reachability for target items of a particular user over varying levels of selection rule stochasticity and varying action space sizes. Larger action spaces correspond to improved item reachability for all values of $\beta$. However, increases in the number of action items have a more pronounced effect for larger $\beta$ values.

While increasing the size of the action space uniformly improves reachability, the same cannot be said about the type of action space.
For each user, we compute the average lift over target items as a metric for user agency in a recommender (Figure \ref{fig:knn_libfm_lift}).
For LibFM, the choice of action space does not strongly impact the average user lift, though \emph{Next K} displays more variance across users than the other two. However, for Item KNN, there is a stark difference between \emph{Next K} and and random action spaces.

\begin{figure}
    \center
    \includegraphics[width=0.8\columnwidth]{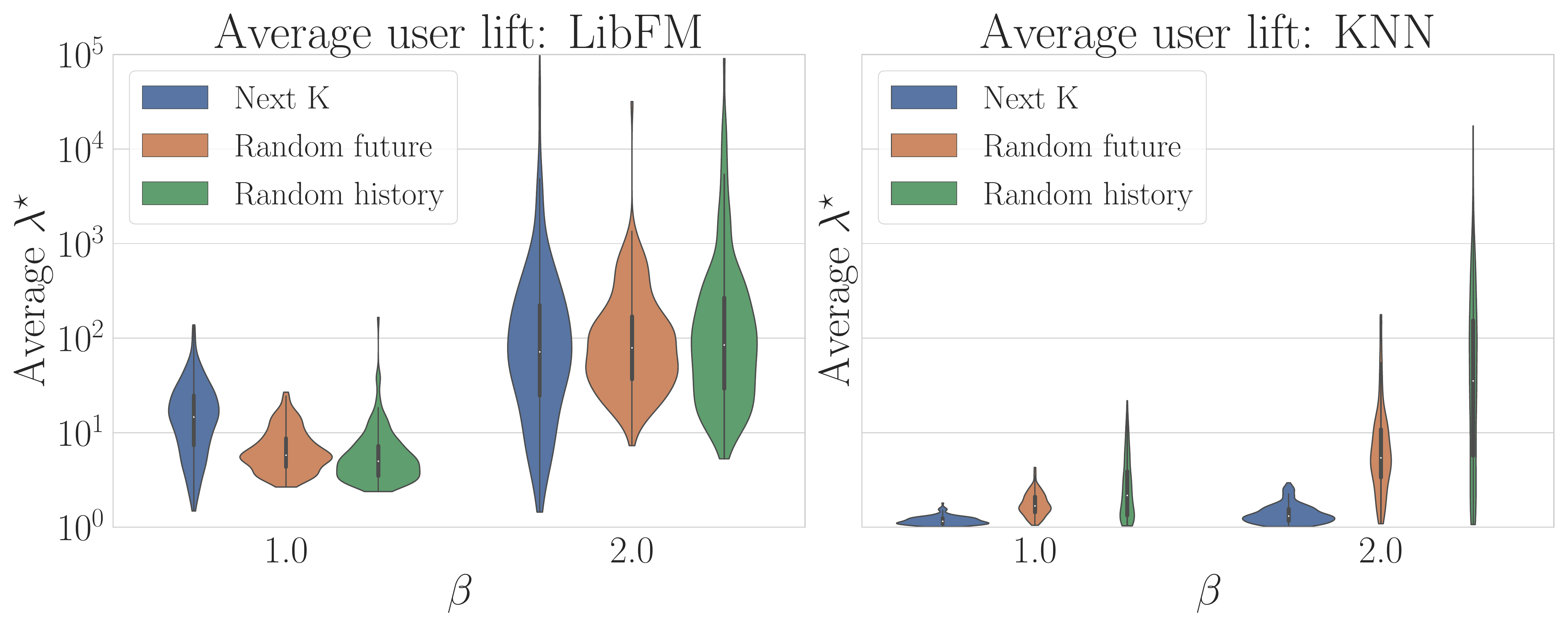}
    \caption{The distribution of average lifts (a notion of agency) over users. Colors indicate different user action spaces for LibFM (left) and KNN (right) on ML-1M.} \label{fig:knn_libfm_lift}
\end{figure}

\paragraph{Role of preference model}
As Figure \ref{fig:knn_libfm_lift} illustrates, a system using LibFM provides more agency on average than one using KNN. We now consider how this relates to properties of the preference models.
First, consider the fact that for LibFM, there is higher variance among user-level average lifts observed for \emph{Next K} action space compared with random action spaces.
This can be understood as resulting from the user-specific nature of \emph{Next K} recommended items. On the other hand, random action spaces are user independent, so it is not surprising that there is less variation across users.

In a neighborhood-based model users have leverage to increase the $\rho$ reachability only for target items in the neighborhood of action items. In the case of KNN, the next items up for recommendation are in close geometrical proximity to each other. This limits the opportunity for discovery of more distant items for \emph{Next K} action space. On the other hand, the action items are more uniformly over space of item ratings in random action models, thus contributing to much higher opportunities for discovery.
Additionally, we see that \emph{History Edits} displays higher average lift values than \emph{Future Edits}. We posit that this is due to the fact that editing $K$ items from the history leads to a larger ratio of strategic to non-strategic items. %

\subsection{Bias in Movie, Music, and News Recommendation}\label{sec:exp_bias}
{We futher compare aggregated stochastic reachability against properties of user and items to investigate bias.
We aggregate baseline and max reachability to compute user-level metrics of discovery and item-level metrics of availability.
The audit demonstrates popularity bias for items with respect to baseline availability. This bias persists in the best case for neighborhood based recommenders and is thus unavoidable, whereas it could be mitigated for MF recommenders. User discovery aggregation reveals inconclusive results with weak correlations between the length of users' experience and their ability to access content.}

\begin{figure}
    \center
    \includegraphics[width=0.8\columnwidth]{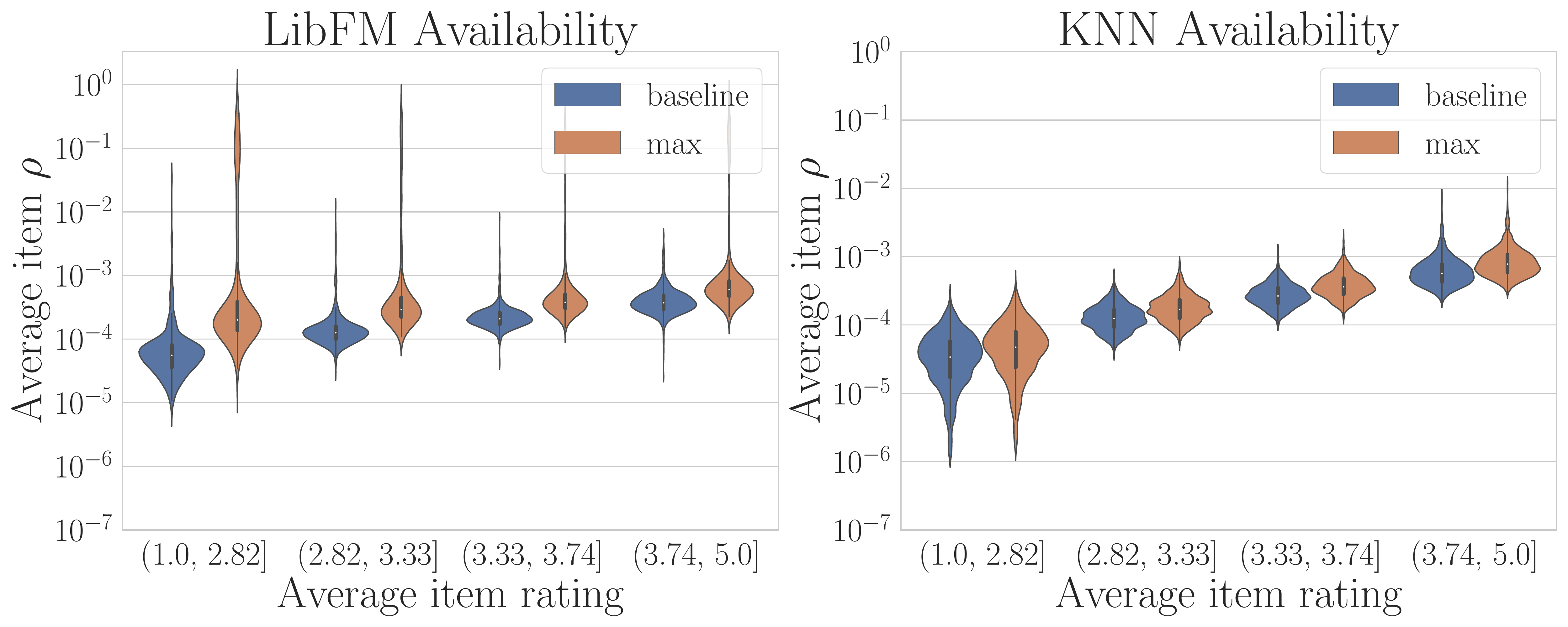}
    \caption{Comparison of baseline and best case availability of content, across four popularity categories for LibFM (left) and KNN (right) preference models. Reachability evaluated on ML-1M for \emph{Next 10} action space with $\beta=2$.} \label{fig:pop_bias}
\end{figure}

\paragraph{Popularity bias}
In Figure~\ref{fig:pop_bias}, we plot the baseline and best case item availability (as in~\eqref{eq:disc_avail}) to investigate popularity bias.
We consider popularity defined by the average rating of an item in a dataset.
Another possible definition of popularity is rating frequency, but for this definition we did not observe any discernable bias.
For both LibFM and KNN models, the baseline availability displays a correlation with item popularity, with Spearman's rank-order correlations of $r_s=0.87$ and $r_s=0.95$.
This suggests that as recommendations are made and consumed, more popular items will be recommended at disproportionate rates.

Furthermore, the best case availability for KNN displays a similar trend ($r_s=0.94$), indicating that the propagation of popularity bias can occur independent of user behavior.
This does not hold for LibFM, where the best case availability is less clearly correlated with popularity ($r_s=0.57$).
The lack of correlation for best case availability holds in the additional settings of music artist and news recommendation with the LibFM model (Figure~\ref{fig:pop_music_news}).
Our audit does not reveal an unavoidable systemic bias for LibFM recommender, meaning that any biases observed in deployment are due in part to user behaviour.
In contrast, we see a systematic bias for the KNN recommender, meaning that regardless of user actions, the popularity bias will propagate.

\begin{figure}
    \center
    \includegraphics[width=0.8\columnwidth]{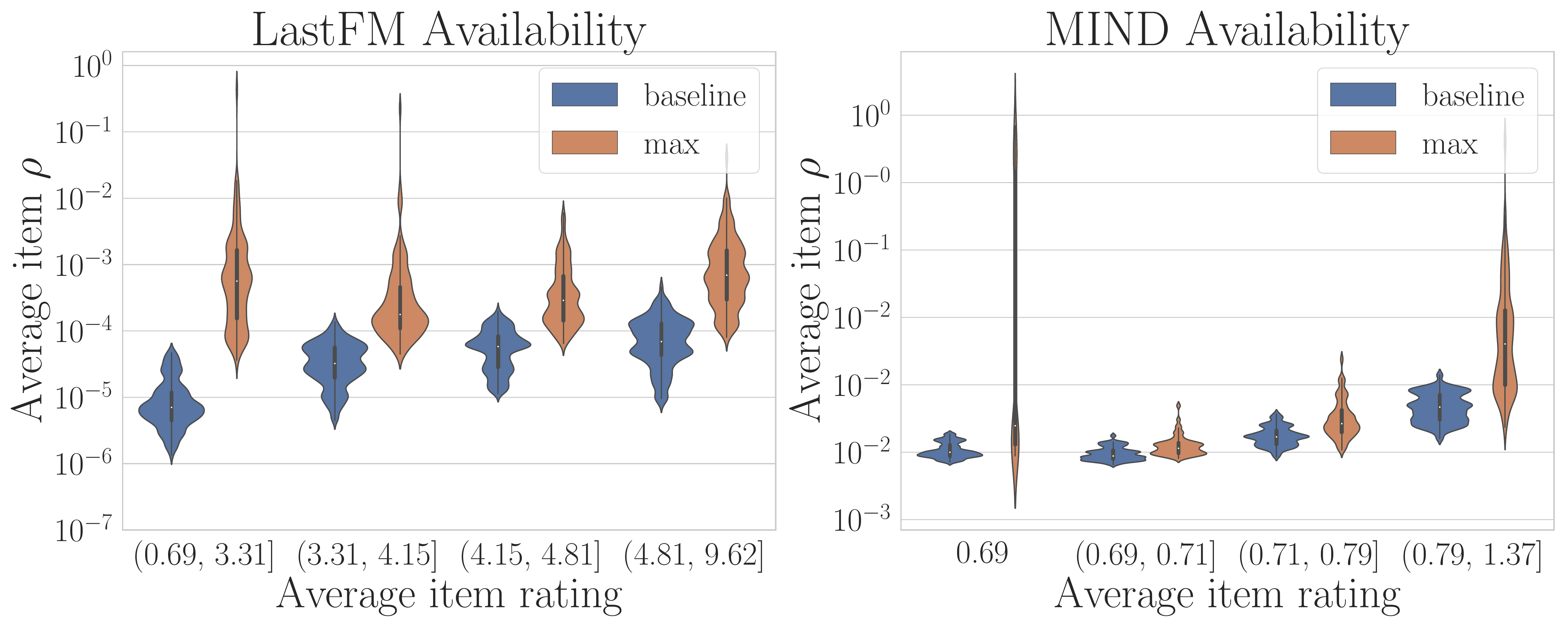}
    \caption{Comparison of baseline and best case availability of content for four popularity categories for LastFM (left) and MIND (right) with \emph{Next 10} actions, LibFM model, and $\beta=2$. } \label{fig:pop_music_news}
\end{figure}

\paragraph{Experience bias}
To consider the opportunities for discovery provided to users, we perform user level aggregations of max reachability values as in~\eqref{eq:disc_avail}.
We investigate experience bias by considering how the discovery metric changes as a function of the number of different items a user has consumed so far, i.e. their experience.
Figure~\ref{fig:exp_bias} illustrates that experience is weakly correlated with baseline discovery for movie recommendation ($r_s=0.48$), but not so much for news recommendation ($r_s=0.05$).
The best case discovery is much higher, meaning that users have the opportunity to discover many of their target items.
However, the weak correlation with experience remains for best case discovery of movies ($r_s=0.53$).

\section{Discussion} \label{sec:conclusion}

In this paper, we generalize reachability as first defined by~\cite{dean2020recommendations} to incorporate stochastic recommendation policies.
We show that for linear preference models and soft-max item selection rules, max reachability can be computed via a convex program for a range of user action models.
Due to this computational efficiency, reachability analysis can be used to audit recommendation algorithms.
Our experiments illustrate the impact of system design choices and historical data on the availability of content and users' opportunities for discovery, highlighting instances in which popularity bias is inevitable regardless of user behavior.

\begin{figure}
    \center
    \includegraphics[width=0.8\columnwidth]{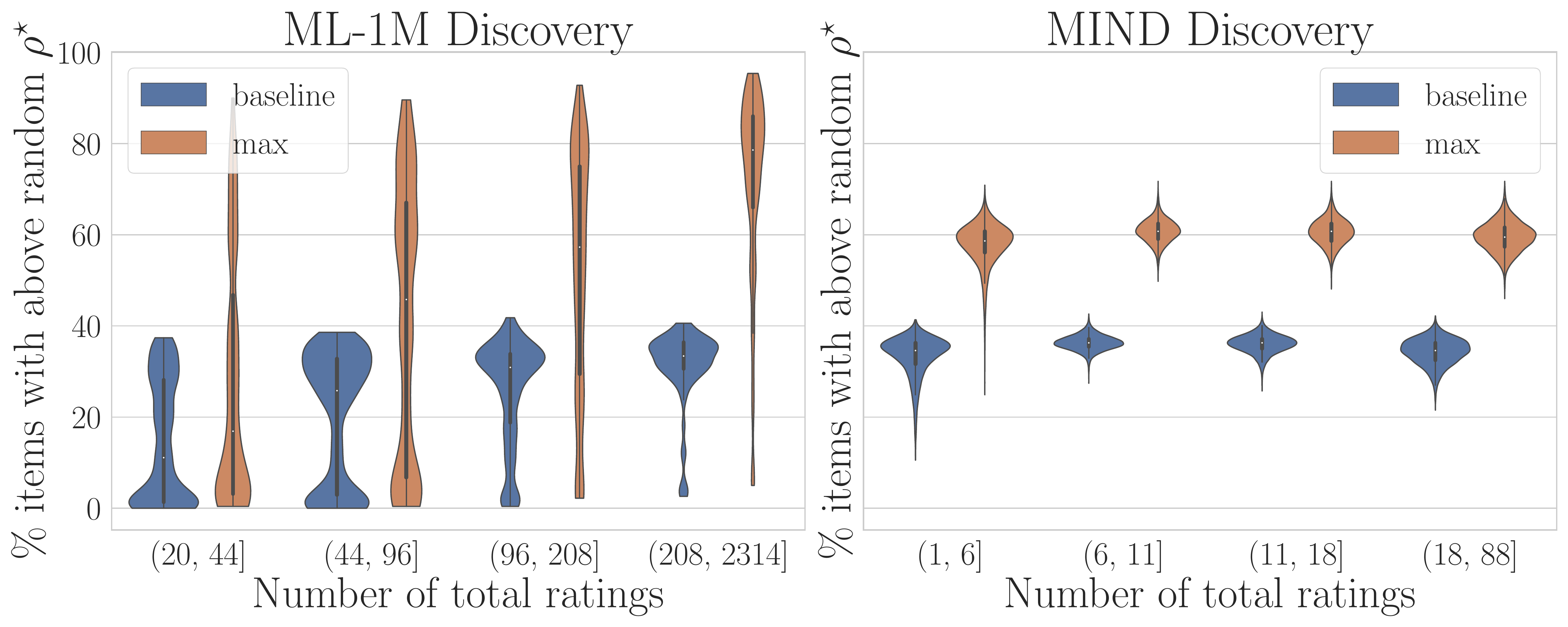}
    \caption{Comparison of baseline and best case fraction of items with better than random $\rho^\star$, grouped across four levels of user history length. Reachability evaluated on ML-1M (left) and MIND (right) for \emph{Next 10} action space, $\beta=2$, and LibFM model.} \label{fig:exp_bias}
\end{figure}

The reachability metric provides an upper bound for discovery and availability within a recommendation system.
While it has the benefit of making minimal assumptions about user behavior, the drawback is that it allows for perfectly strategic behaviors that would require users to have full knowledge of the internal structure of the model.
The results of a reachability audit may not be reflective of probable user experience, and thus reachability acts as a necessary but not sufficient condition.
{Nonetheless, reachability audit can lead to actionable insights by identifying inherent limits in system design. They allow} system designers to assess potential biases before releasing algorithmic updates into production. Moreover, as reachability depends on the choice of action space, such system-level insights might motivate user interface design: for example, a sidebar encouraging users to re-rate $K$ items from their history.

We point to a few directions of interest for future work.
Our result on the lack of trade-off between accuracy and reachability is encouraging.
Minimum one-step reachability conditions could be efficiently incorporated into learning algorithms for preference models.
It would also be interesting to extend reachability analysis to multiple interactions and longer time horizons.

Lastly, we highlight that the reachability lens presents a contrasting view to the popular line of work on robustness in machine learning.
When human behaviors are the subject of classification and prediction, building ``robustness'' into a system may be at odds with ensuring agency.
Because the goal of recommendation is personalization more than generalization, it would be appropriate to consider robust access over robust accuracy.
This calls for questioning the current normative stance and critically examining system desiderata in light of usage context.
 \section*{Acknowledgements}
This research is generously supported in part by ONR awards N00014-20-1-2497 and N00014-18-1-2833, NSF CPS award 1931853, and the DARPA Assured Autonomy program (FA8750-18-C-0101).
SD is supported by NSF GRF under Grant No. DGE 1752814.

\bibliographystyle{plainnat}
\bibliography{../icml2021/stochastic_reachability}
\appendix

\newpage
\onecolumn
\appendix

\section{Further Examples} \label{app:examples}

\begin{example}[Biased MF-SGD]
Biased matrix factorization models~\cite{koren2015advances} compute scores as rating predictions with
\[s_{ui} = \vec p_u^\top \vec q_i + f_u + g_i + \mu\]
 $P\in\mathbb{R}^{\nusers\times \latentdim}$ and $Q\in\mathbb{R}^{\nitems\times \latentdim}$ are respectively user and item factors for some latent dimension $\latentdim$, $\vec f\in\mathbb{R}^\nusers$ and $\vec g\in\mathbb{R}^\nitems$ are respectively user and item biases, and $\mu\in\mathbb{R}$ is a global bias. 

 The parameters are learned via the regularized optimization
\[\min_{P,Q,\vec f,\vec g,\mu} \frac{1}{2}\sum_{u}\sum_{i\in\Omega_u}\|\vec p_u^\top \vec q_i+ f_u + g_i + \mu - r_{ui}\|_2^2 + \frac{\lambda}{2} \|P\|_F^2 + \frac{\lambda}{2}\|Q\|_F^2 \:.\]
Under a stochastic gradient descent minimization scheme~\cite{koren2008factorization} with step size $\alpha$, the one-step
update rule for a user factor is
\[\vec p^+_u = \vec p_u - \alpha \sum_{i\in\Omega_u^\calA}(\vec q_i\vec q_i^\top \vec p_u + \vec q_i(f_u + g_i + \mu) - \vec q_i r_{ui}) -\alpha \lambda\vec p_u\:.\]
User bias terms can be updated in a similar manner, but because the user bias is equal across items, it does not impact the selection of items.

Notice that this expression is affine in the mutable ratings.
Therefore, we have an affine score function:
\[\phi_u(\vec a) = Q\vec p_u^+=Q \left( (1-\alpha\lambda)\vec p_u - \alpha Q_\calA^\top (Q_\calA \vec p_u+\vec g_{\cal A}+(\mu+f_u) \mathbf{1}) + \alpha Q_\calA^\top \vec a \right)\]
where we define $Q_\calA = Q_{\Omega_u^\calA}\in \mathbb{R}^{|\Omega_u^\calA|\times d}$ and $\vec g_\calA = \vec g_{\Omega_u^\calA} \in\mathbb{R}^{|\Omega_u^\calA|}$.
Therefore,
\[B_u = \alpha Q Q_\calA^\top,\quad
\vec c_u = Q \left( (1+\lambda)\vec p_u - \alpha Q_\calA^\top (Q_\calA \vec p_u+\vec g_{\cal A}+(\mu+f_u) \mathbf{1})\right)\:.\]
\end{example}

\begin{example}[Biased MF-ALS]
Rather than a stochastic gradient descent minimization scheme, we may instead update the model with an alternating least-squares strategy~\cite{zhou2008large}.
In this case, the update rule is
\begin{align*}
\vec p^+_u &= \arg\min_{\vec p} \sum_{i\in\Omega_u^\calA\cap\Omega_u}\|\vec p^\top \vec q_i+ f_u + g_i + \mu - r_{ui}\|_2^2 + \lambda \|\vec p\|_2^2\\
&= (Q_u^\top Q_u + \lambda I)^{-1} (Q^\top \vec r_u + Q_\calA^\top (\vec g_\calA +(\mu+f_u) \mathbf{1}) + Q_\calA^\top \vec a )
\end{align*}
where we define $Q_u = Q_{\Omega_u^\calA\cap\Omega_u}$.
Similar to in the SGD setting, this is an affine expression, and therefore we end up with the affine score parameters
\[B_u = Q(Q_u^\top Q_u + \lambda I)^{-1} Q_\calA^\top,\quad
\vec c_u = Q (Q_u^\top Q_u + \lambda I)^{-1}(Q^\top \vec r_u + Q_\calA^\top (\vec g_\calA +(\mu+f_u) \mathbf{1}))\:.\]
\end{example}

\begin{example}[Biased Item-KNN]
Biased neighborhood models~\cite{desrosiers2011comprehensive} compute scores as rating predictions by a weighted average, with
\[s_{ui} = \mu+f_u + g_i +\frac{\sum_{j\in\mathcal N_i} w_{ij} (r_{uj} -\mu-f_u-g_i)}{\sum_{j\in\mathcal N_i} |w_{ij}|}\]
where $w_{ij}$ are weights representing similarities between items, $\mathcal N_i$ is a set of indices which are in the neighborhood of item $i$, and $\vec f,\vec g,\mu$ are bias terms. Regardless of the details of how these parameters are computed, the predicted scores are an affine function of observed scores: 
\[\vec s_u = W \vec r_u - W(\vec g + (\mu+f_u)\vec 1) + \vec g + (\mu+f_u)\vec 1 \]
 where we can define
\[W_{ij} = \begin{cases}\frac{w_{ij}}{\sum_{j\in\mathcal N_i} |w_{ij}|} & j\in\mathcal N_i\\ 0 & \text{otherwise} \end{cases}\]

Therefore, the score updates take the form
\[\phi_u(\vec a) = \underbrace{W(\vec r_u -\vec g + (\mu+f_u)\vec 1) + \vec g + (\mu+f_u)\vec 1 )}_{\vec c_u} + \underbrace{W E_{\Omega_u^\calA}}_{B_u} \vec a\]
where $E_{\Omega_u^\calA}$ selects rows of $W$ corresponding to action items.
\end{example}

\begin{example}[SLIM and EASE]
For both SLIM~\cite{ning2011slim} and EASE~\cite{steck2019embarrassingly}, scores are computed as
\[s_{ui} = \vec w_i^\top \r_u\]
for $\vec w_i$ the row vectors of a weight matrix $W$.
For SLIM, the sparse weights are computed as
\begin{align*}
\min_W~&\frac{1}{2}\|R - RW\|_F^2 + \frac{\beta}{2}\|W\|_F^2 + \lambda\|W\|_1\\
\text{s.t.}~&W\geq 0,\mathrm{diag}(W)=0
\end{align*}
For EASE, the weights are computed as
\begin{align*}
\min_W~&\frac{1}{2}\|R - RW\|_F^2 + \lambda\|W\|_F^2\\
\text{s.t.}~&\mathrm{diag}(W)=0
\end{align*}

In both cases, the score updates take the form
\[\phi_u(\vec a) = \underbrace{W\vec r_u}_{\vec c_u} + \underbrace{W E_{\Omega_u^\calA}}_{B_u} \vec a\:.\]
\end{example}

\section{Datasets, Model Training and Computing Infrastructure}\label{app:data_recs}

\subsection{Detailed data description}\label{app:data_details}

\paragraph{MovieLens 1 Million}
ML-1M dataset was downloaded from Group Lens\footnote{\url{https://grouplens.org/datasets/movielens/1m/}} via the RecLab~\cite{krauth2020offline} interface\footnote{\url{https://github.com/berkeley-reclab/RecLab}}. It contains 1 through 5 rating data of 6040 users for 3706 movies. There are a total of 1000209 ratings (4.47\% rating density). The original data is accompanied by additional user attributes such as age, gender, occupation and zip code. Our experiments didn't indicate observable biases across these attributes. In Section \ref{app:experiments} we show user discovery results split by gender.

Figure \ref{fig:ml1m_stats} illustrates descriptive statistics for the ML-1M dataset.
\begin{figure}
    \center
    \includegraphics[width=0.98\columnwidth]{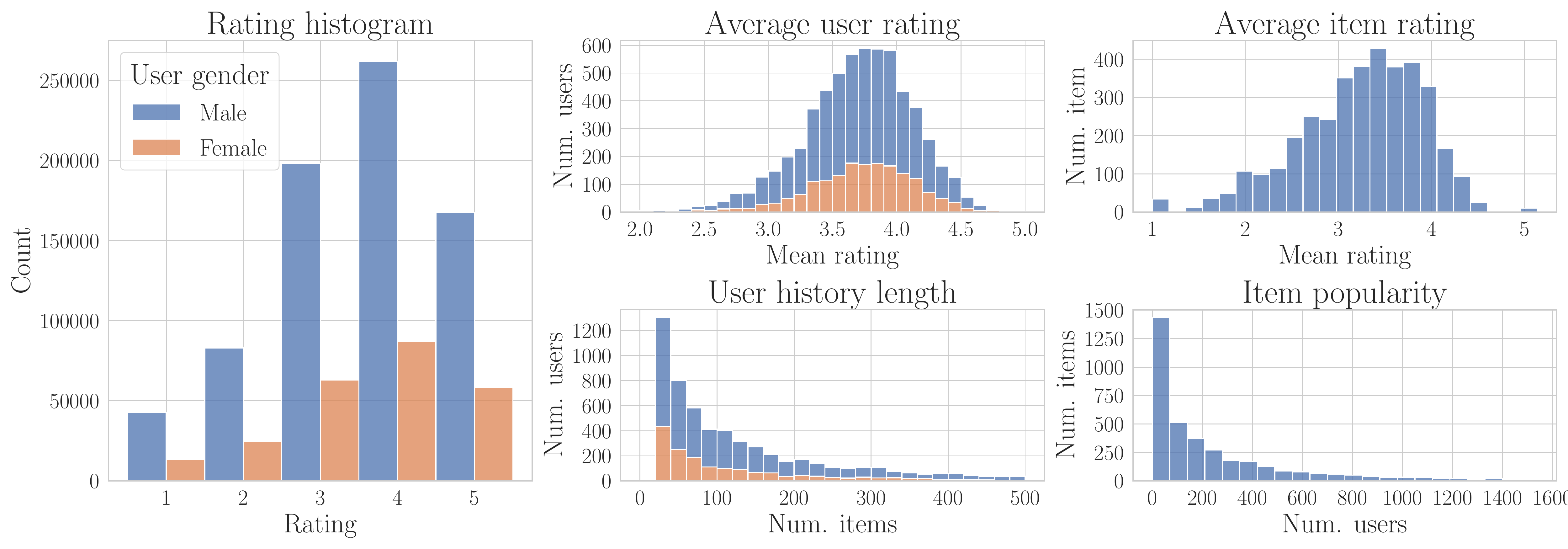}
    \caption{Descriptive statistics for the MovieLens 1M dataset split by user gender (28.3\% female).  The mean ratings of both users and items are roughly normally distributed while user's history length and item popularity display power law distributions.} \label{fig:ml1m_stats}
\end{figure}

\paragraph{LastFM 360K}
The LastFM 360K dataset preprocessed\footnote{\url{https://zenodo.org/record/3964506\#.XyE5N0FKg5n}} by~\citet{shakespeare2020exploring} was loaded via the RecLab interface. It contains data on the number of times users have listened to various artists. We select a random subset of 10\% users and a random subset of 10\% items yielding  13698 users, 20109 items and 178388 ratings (0.056\% rating density). The item ratings are not explicitly expressed by users as in the MovieLens case. For a user $u$ and an artist $i$ we define implicit ratings $r_{ui} = \log (\# \mathrm{listens}(u,i) + 1)$. This data is accompanied by artist gender, an item attribute.

Figure \ref{fig:lastfm_stats} illustrates descriptive statistics for the LastFM dataset.

\begin{figure}
    \center
    \includegraphics[width=0.98\columnwidth]{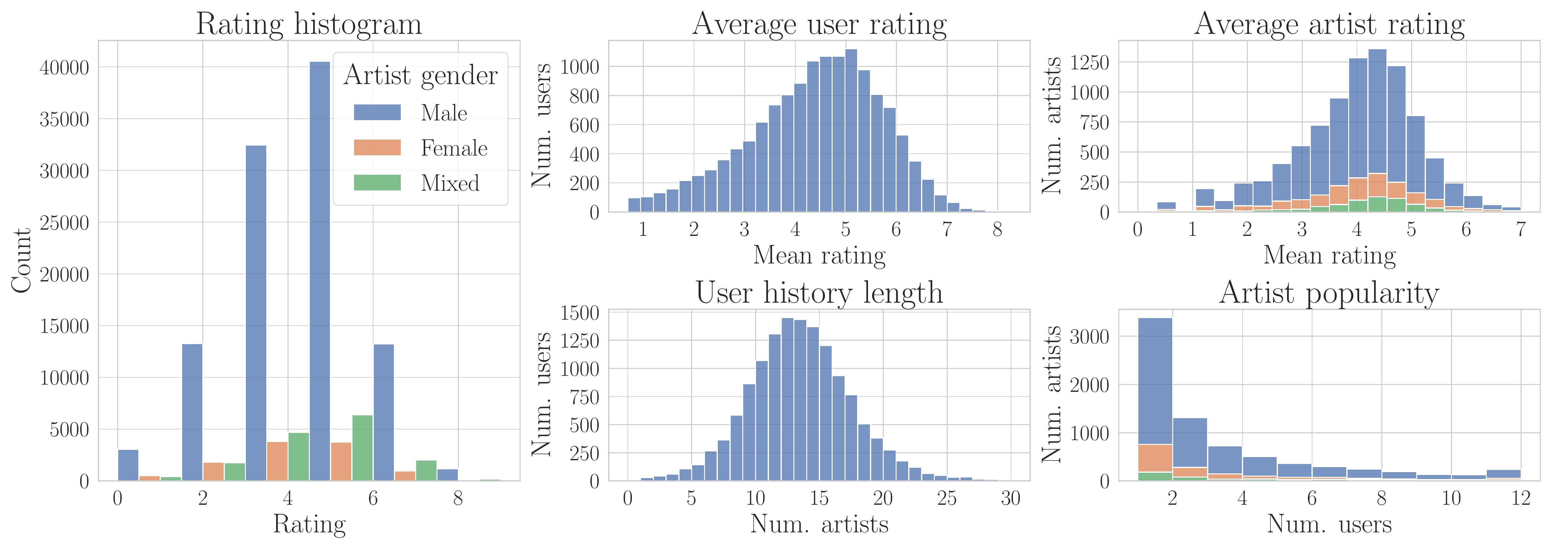}
    \caption{Descriptive statistics for the LastFM dataset split by artist gender (over 54\% of artists have unknown gender, 36\% are male, 6.5\% are female and 3.5\% are mixed gender). Unlike ML 1M, for the LastFM dataset the user history lengths are normally distributed around a mean of around 12 artists.} \label{fig:lastfm_stats}
\end{figure}

\paragraph{MIcrosoft News Dataset (MIND)}
MIND is a recently published impression dataset collected from logs of the Microsoft News website \footnote{\url{https://microsoftnews.msn.com/}}. We downloaded the \texttt{MIND-small} dataset\footnote{\url{https://msnews.github.io/}}, which contains behaviour log data for 50000 randomly sampled users. There are 42416 unique news articles, spanning 17 categories and 247 subcategories. We aggregate user interactions at the subcategory level and consider the problem of news subcategory recommendation. The implicit rating of a user $u$ for subcategory $i$ is defined as: $r_{ui} = \log (\# \mathrm{clicks}(u,i) + 1)$. The resulting aggregated dataset contains 670773 ratings (5.54\% rating density).

Figure \ref{fig:mind_stats} illustrates descriptive statistics for the MIND dataset.

\begin{figure}
    \center
    \includegraphics[width=0.98\columnwidth]{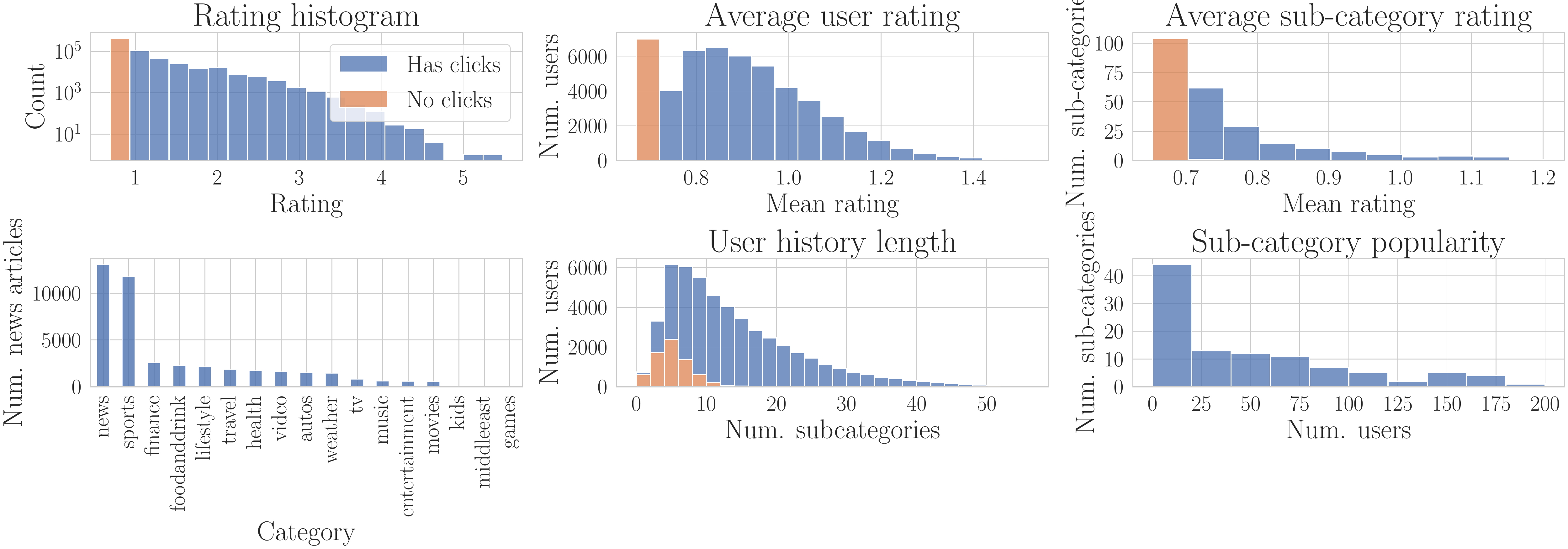}
    \caption{Descriptive statistics for the MIND dataset: The orange bars correspond to either user or items that have been displayed but have not clicked/ have not been clicked on. Unlike ML 1M and LastFM, the MIND ratings have strongly skewed distribution, with most user-subcategory ratings corresponding to users clicking on a small number of articles from the sub-category. There is a long tail of higher ratings that corresponds to most popular subcategories. The leftmost plot illustrates the unequal distribution of news articles across categories. The same qualitative behaviour holds for sub-categories.} \label{fig:mind_stats}
\end{figure}

\subsection{Model Tuning}\label{app:model_tuning}
For each dataset and recommender model we perform grid search for progressively finer meshes over the tunable hyper-parameters of the recommender.
We use recommenders implemented by the RecLab library.
For each dataset and recommender we evaluate hyperparameters on a 10\% split of test data. The best hyper-parameters for each setting are presented in Table \ref{tbl:tuning}.

\paragraph{LibFM}
We performed hyper-parameter tuning to find suitable learning rate and regularization parameter for each dataset. Following~\cite{dacrema2021troubling} we consider $\texttt{lr} \in(0.001, 0.5)$ as the range of hyper-parameters for the learning rate and $\texttt{reg}\in (10^{-5}, 10^{0})$ for the regularization parameter. In all experimental settings we follow the setup of~\cite{rendle2019difficulty} and use 64 latent dimensions and train with SGD for 128 iterations.

\paragraph{KNN}
We perform hyperparameter tuning with respect to neighborhood size and shrinkage parameter. Following~\cite{dacrema2021troubling} we consider the range $(5, 1000)$ for the neighborhood size and $(0,1000)$ for the shrinkage parameter. We tune KNN only for the ML-1M dataset.

\begin{table}[t]
    \caption{Tuning results}
    \label{tbl:tuning}
    \vskip 0.15in
    \begin{center}
    \begin{small}
    \begin{sc}
    \begin{tabular}{l|cccc|}
    \toprule
     & \multicolumn{4}{c|}{LibFM} \\
     Dataset & LR & Reg. & Test RMSE & Run time (s)  \\
    \midrule
    ML 1M   &  0.0112& 0.0681 & 0.716 &2.76 $\pm$ 0.32 \\
    LastFM     & 0.0478 & 0.2278 &1.122 & 0.78 $\pm$ 0.13\\
    MIND     & 0.09& 0.0373& 0.318 & 3.23 $\pm$ 0.37\\
    \midrule
     &  \multicolumn{4}{c|}{KNN} \\
     Dataset & Neigh. size & Shrinkage & Test RMSE & Run time (s)\\
    \midrule
    ML 1M  & 100 & 22.22& 0.756 & 0.34 $\pm$ 0.07\\
    \bottomrule
    \end{tabular}
    \end{sc}
    \end{small}
    \end{center}
    \vskip -0.1in
    \end{table}

\subsection{Experimental Infrastructure and Computational Complexity}\label{app:reach_complexity}
All experiments were performed on a 64 bit desktop machine equipped with 20 CPUs (Intel(R) Core(TM) i9-7900X CPU @ 3.30GHz) and a 62 GiB RAM.  Average run times for training an instance of each recommender can be found in Table \ref{tbl:tuning}.

\section{Computing Reachability}

\subsection{Conic Program Implementation}

The optimization problem in~\eqref{eq:linear_recourse} is convex, and we solve it as a conic optimization problem using the MOSEK Python API under an academic license~\cite{mosek}.
We reformulate~\eqref{eq:linear_recourse} as an optimization over the exponential cone:
\begin{align}
\begin{split} \label{eq:exp_cone_opt}
\min_{\substack{t,\vec a,\vec u}}~~& t -\beta (\vec b_{ui}^\top \vec a + c_{ui})\\
\text{s.t.}~~&\vec a\in\calA_u,\quad \sum_{j\in\Omega_u^t} u_j \leq 1,\\
&\left(u_j, 1, \beta(\vec b_{uj}^\top \vec a + c_{uj})-t\right) \in\mathcal K_{exp}~~\forall~~j\in\Omega_u^t
\end{split}
\end{align}

The parameters $B_u$ and $\vec c_u$ are computed for each user based on the recommender model as described in Section~\ref{app:examples}.
For the LibFM model, we consider user updates with $\alpha=0.1$ and $\lambda=0$.
Average run times for computing reachability of a user-item pair in various settings can be found in Table \ref{tbl:run_times}.

\subsection{Experimental Setup for Computing Reachability}
\paragraph{ML 1M} We compute max stochastic reachability for the LibFM and KNN preference model. We consider three types of user action spaces: \emph{History Edits}, \emph{Future Edits}, and \emph{Next K} in which users can strategically modify the ratings associated to $K$ randomly chosen items from their history, $K$ randomly chosen items from that they have not yet seen, or the top-$K$ unseen items according to the baseline scores of the preference model. For each of the action spaces we consider
$K\in\{5,10,20\}$.

We perform reachability experiments on a random 3\% subset of users (176).
For each choice of preference model, action space type and action space size we sample for each user 500 random items that have not been previously rated and are not action items. For each user-item pair we compute reachability for a range of stochasticity parameters $\beta  \in \{1,2,4,10\}$. Note that across all experimental settings we compute reachability for the same subset of users, but different subsets of randomly selected target items.

We use the ML 1M dataset to primarily gain insights in the role that preference models, item selection stochasticity and strategic action spaces play in determining the maximum achievable degree of stochastic reachability in a recommender system.

\paragraph{LastFM}
We run reachability experiment for LibFM recommender with \emph{Next K = 10} action model and stochasticity parameter $\beta = 2$. We compute $\rho^\star$ values for 100 randomly sampled users and 500 randomly sampled items from the set of non-action items (target items can include previously seen items). Unlike the ML 1M dataset, the set of target items is shared among all users.

\paragraph{MIND}
We run reachability experiments for LibFM recommender with \emph{Next K = 10} action model and stochasticity parameter $\beta = 2$. We compute reachability for all items and users.

\paragraph{Reachability Run Times}
In Table \ref{tbl:run_times} we present the average clock time for computing reachability for a user-item pair in the settings described above. Due to internal representation of action spaces as matrices the runtime dependence on the dimension of the action space is fairly modest. We do not observe significant run time differences between different types of action spaces. We further add multiprocessing functionality to parallelize reachability computations over multiple target items.

\begin{table}[t]
    \caption{Reachability run times (in seconds).}
    \label{tbl:run_times}
    \vskip 0.15in
    \begin{center}
    \begin{small}
    \begin{sc}
    \begin{tabular}{l|c|c|c|r}
    \toprule
    Num. actions &{ML 1M (LibFM)} & {ML 1M (KNN)} & {LastFM} &  {MIND}  \\
    \midrule
    K = 5   & 0.82 $\pm$ 0.04& 9.8 $\pm$ 3.4 & -&-\\
    K = 10 & 0.87 $\pm$ 0.04 & 10.2 $\pm$ 6.1 & 4.91 $\pm$ 0.32 & 0.44  $\pm$ 0.01  \\
    K= 20  & 0.91 $\pm$ 0.05 & 11.4 $\pm$ 6.8 &-&-\\
    \bottomrule
    \end{tabular}
    \end{sc}
    \end{small}
    \end{center}
    \vskip -0.1in
    \end{table}

\section{Detailed Experimental Results} \label{app:experiments}
\subsection{Impact of recommender design}

We present further insights in the experimental settings studied in Section \ref{sec:pipeline}. For ML-1M, we replicate the log scale scatterplots of $\rho^\star$ against baseline $\rho$ for all the action spaces (\emph{Next K, Random Future, Random History}), the full range of $\beta\in\{1,2,4,10\}$ and the two preference models: LibFM (Figure \ref{fig:libfm_full}) and KNN (Figure \ref{fig:knn_full}). We observe that for both KNN and LibFM, random history edits can lead to higher $\rho^\star$ values. We posit that this increased agency is partly due to the fact that when editing $K$ items from the history a user edits a larger fraction of total ratings compared to editing $K$ future items.

The most striking feature of KNN reachability results is the strong correlation between baseline $\rho$ and $\rho^\star$. The correlations between baseline and max probability of recommendation is less strong in the case of LibFM. These insights are corroborated by Figure \ref{fig:violin_full} which compares the average LibFM and KNN user lifts for different choices of action space, action size \emph{K}, stochasticity parameter $\beta$.

\begin{figure}
    \center
    \includegraphics[width=0.98\columnwidth]{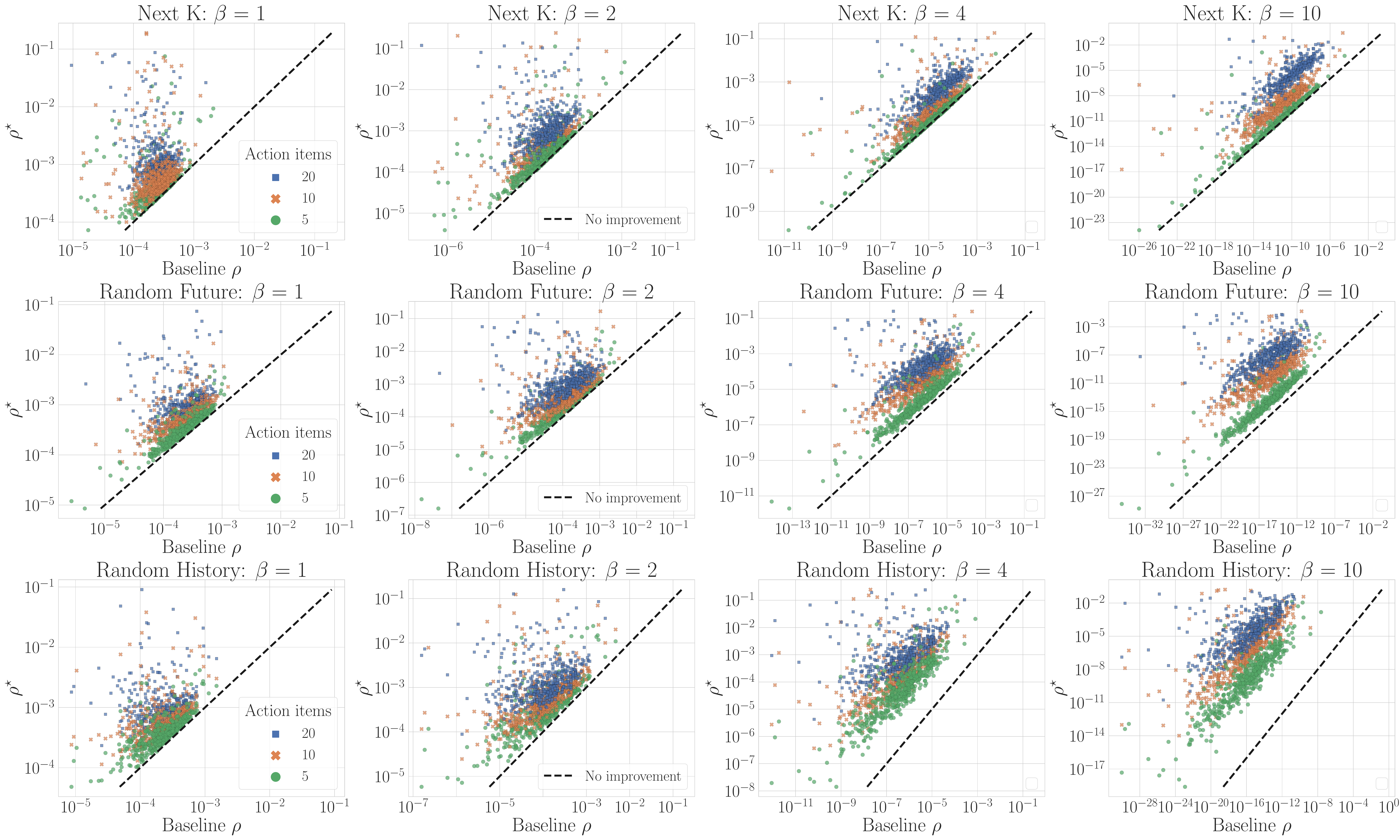}
    \caption{Log scale scatterplots of $\rho^\star$ against baseline $\rho$ evaluated for the LibFM preference model.} \label{fig:libfm_full}
\end{figure}
\begin{figure}
    \center
    \includegraphics[width=0.98\columnwidth]{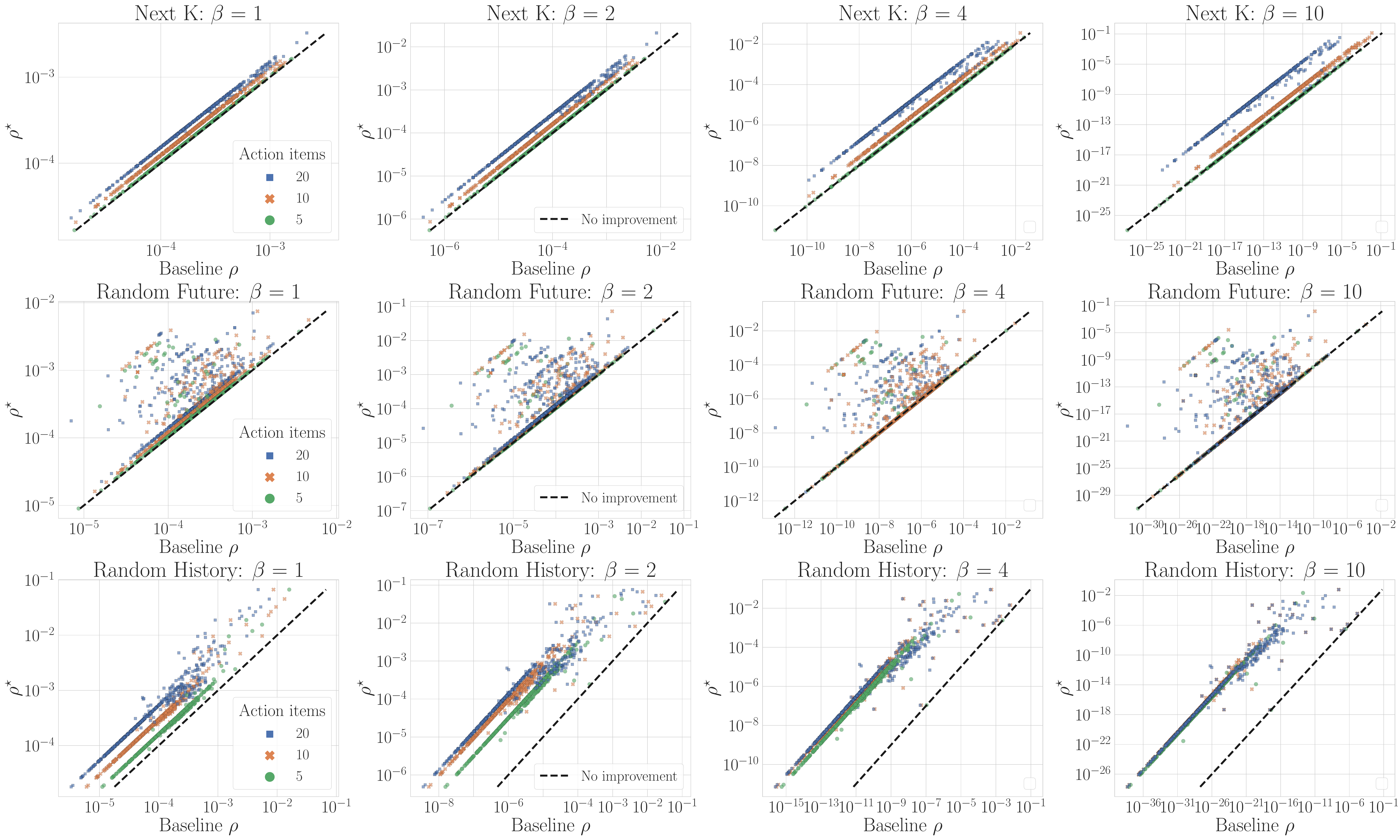}
    \caption{Log scale scatterplots of $\rho^\star$ against baseline $\rho$ evaluated for the KNN preference model.} \label{fig:knn_full}
\end{figure}
\begin{figure}
    \center
    \includegraphics[width=0.98\columnwidth]{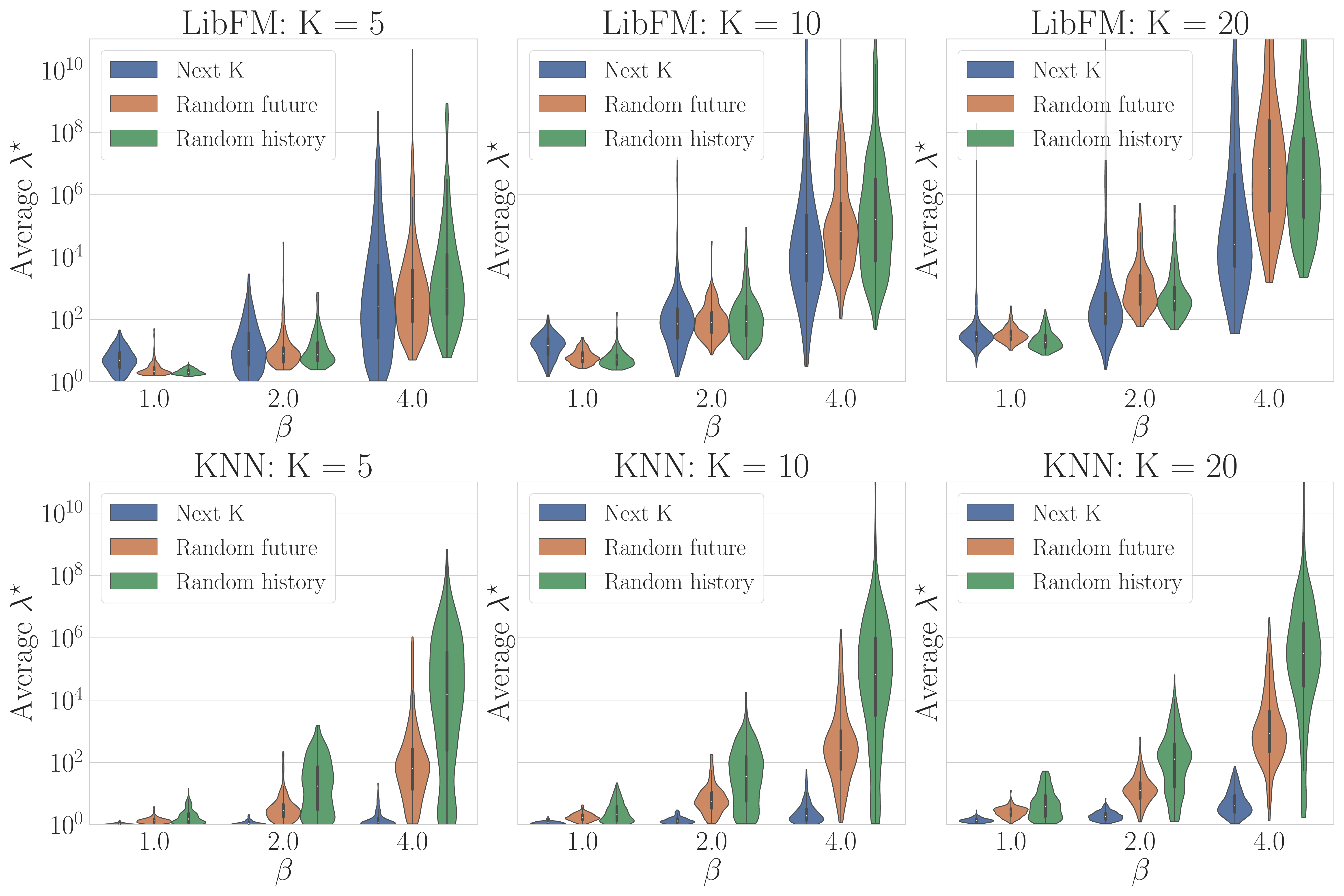}
    \caption{Side by side comparison of average user lifts for LibFM (top row)and KNN(bottom row).} \label{fig:violin_full}
\end{figure}

\subsection{Bias in movie, music, and news recommendation}
\begin{figure}
    \center
    \includegraphics[width=0.98\columnwidth]{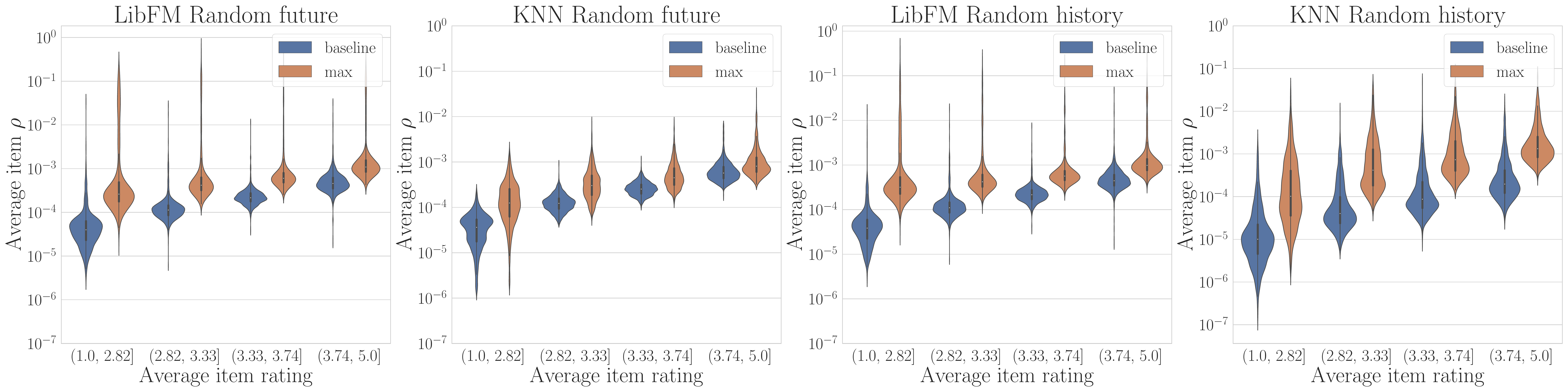}
    \caption{Side by side comparison of baseline and best-case availability of content, across four popularity categories. From left to right: LibFM preference model with \emph{Random Future}, KNN preference model with \emph{Random Future}, LibFM preference model with \emph{Random History}, KNN preference model with \emph{Random History}. Reachability evaluated on ML-1M for with $K=10$ and $\beta=2$.} \label{fig:pop_bias_app}
\end{figure}

We present further results on the settings studied in Section~\ref{sec:exp_bias}.
We replicate the popularity bias results on ML-1M for different action spaces and plot the results in Figure~\ref{fig:pop_bias_app}.
We see that the availability bias for KNN is dependent on the action space, with \emph{Random History} displaying no or little correlation between popularity and max availability.
This is not surprising given the results in Figure~\ref{fig:knn_libfm_lift}.

To systematically study the popularity bias, we compute the Spearman rank-order correlation coefficient to measure the presence of a monotonic relationship between popularity (as measured by average rating) and availability (either in the baseline or max case). We also compute the correlation between the popularity and the prevalence in the dataset, as measured by number of ratings.

\begin{figure}
    \center
    \includegraphics[width=0.6\columnwidth]{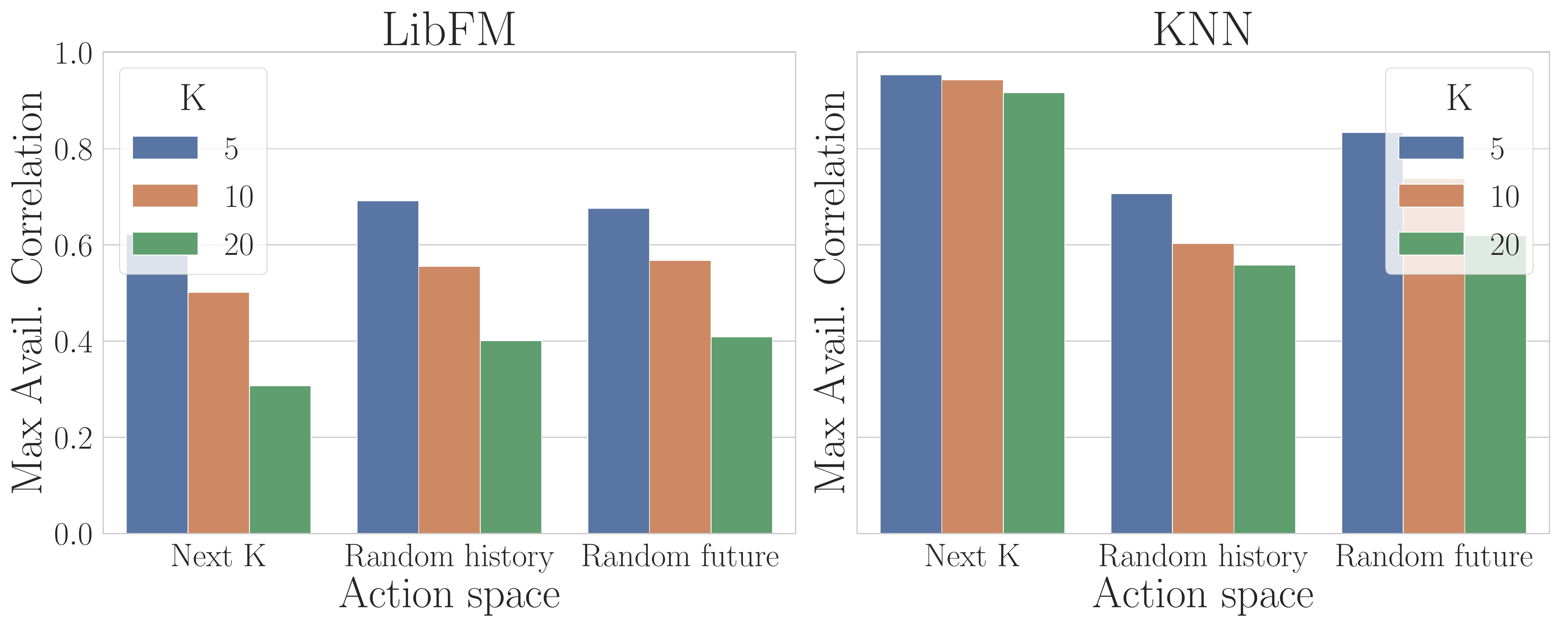}
    \caption{Comparison of Spearman's correlation between item popularity and max availability for different action spaces and models.
    Reachability evaluated on ML-1M with $\beta=2$.} \label{fig:action_pop_bias}
\end{figure}

The impact of user action spaces is displayed in Figure~\ref{fig:action_pop_bias}, which plots the correlation between popularity and max availability for different action spaces.
For comparison, the correlation between popularity and baseline availability is just over 0.8 for all of these settings\footnote{Due to variation in baseline actions, the baseline availability is not exactly the same.}, while the correlation with dataset prevalence is 0.346.
Table~\ref{tbl:corr} shows these correlation values across datasets for a fixed action model. In all cases with the LibFM model, the pattern that popularity is less correlated with max availability than baseline availability holds; however, the correlation with dataset prevalence varies.

To investigate experience bias, we similarly compute the Spearman rank-order correlation coefficient to measure the presence of a monotonic relationship between user experience (as measured by number of items rated) and discovery (either in the baseline or max case).
We observe correlation values of varying sign across datasets and models, and none are particularly strong (Table~\ref{tbl:corr_exp}).

\begin{table}[t]
    \caption{Spearman's correlation with popularity for \emph{Next K} with $K=10$ and $\beta=2$.}
    \label{tbl:corr}
    \vskip 0.15in
    \begin{center}
    \begin{small}
    \begin{sc}
    \begin{tabular}{ll|rrr}
    \toprule
\\ & &  corr. with & corr. with & corr. with \\
dataset & model& dataset prevalence &  baseline availability &  max availability \\
\midrule
 ml-1m &  libfm &       0.346280 &       0.827492 &      0.501316 \\
  ml-1m &    knn &       0.346280 &       0.949581 &      0.942986 \\
   mind &  libfm &       0.863992 &       0.825251 &      0.435212 \\
 lastfm &  libfm &       0.133318 &       0.671101 &      0.145949 \\
\bottomrule
    \end{tabular}
    \end{sc}
    \end{small}
    \end{center}
    \vskip -0.1in
    \end{table}

\begin{table}[t]
    \caption{Spearman's correlation with experience for \emph{Next K} with $K=10$ and $\beta=2$.}
    \label{tbl:corr_exp}
    \vskip 0.15in
    \begin{center}
    \begin{small}
    \begin{sc}
    \begin{tabular}{ll|rrr}
    \toprule
\\ & &  corr. with & corr. with \\
dataset & model& baseline discovery &  max discovery \\
\midrule
 ml-1m &  libfm &       0.475777 &      0.530359 \\
  ml-1m &    knn &       0.206556 &     -0.031929 \\
   mind &  libfm &       0.050961 &      0.112558 \\
 lastfm &  libfm &      -0.084130 &     -0.089226 \\
\bottomrule
    \end{tabular}
    \end{sc}
    \end{small}
    \end{center}
    \vskip -0.1in
    \end{table}

Finally, we investigate gender bias. We compare discovery across user gender for ML-1M and availability across artist gender for LastFM (Figure~\ref{fig:gender_bias}). We do not observe any trends in either baseline or max values.
\begin{figure}
    \center
    \includegraphics[width=0.73\columnwidth]{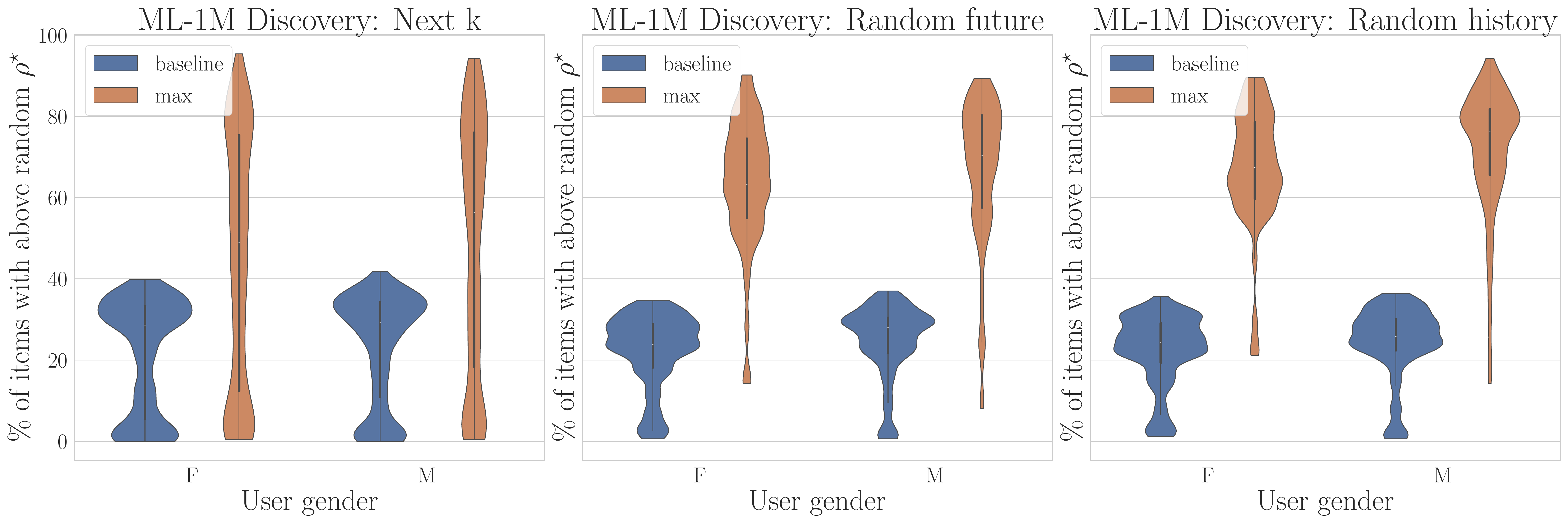}
    \includegraphics[width=0.245\columnwidth]{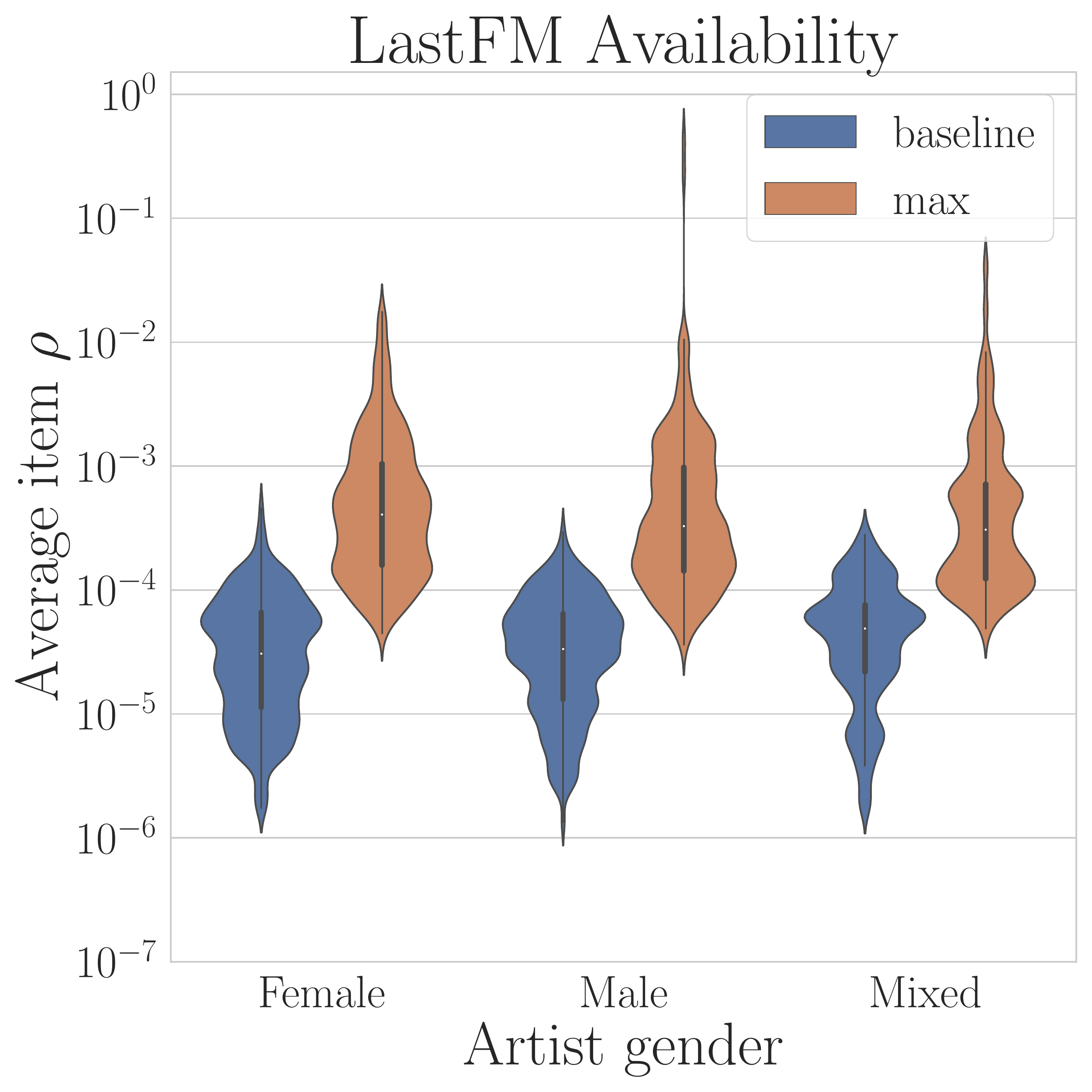}
    \caption{Side by side comparison of baseline and maximum discovery across user gender (left 3 panels) and availability across artist gender (rightmost panel).
    Reachability evaluated on ML-1M and LastFM with LibFM model, $K=10$, different action spaces, and $\beta=2$.} \label{fig:gender_bias}
\end{figure}

\end{document}